\newtheorem{rmk}{Remark}
\newtheorem{assumption}{Assumption}
\newtheorem{lemma}{Lemma}
\newtheorem{theorem}{Theorem}
\newtheorem{definition}{Definition}
\newtheorem{proof}{Proof of Theorem}
\title{\Large \bf Iterative distributed moving horizon estimation of linear systems with penalties on both system disturbances and noise}
\author{
\centerline{\normalsize Xiaojie Li$^{a}$, Song Bo$^{b}$, Yan Qin$^{c}$, Xunyuan Yin$^{a,}$\thanks{Corresponding author: X. Yin. Tel: (+65) 6316 8746. Email: xunyuan.yin@ntu.edu.sg.}
}
\vspace{5mm}\\
\centerline{\small $^{a}$School of Chemistry, Chemical Engineering and Biotechnology, Nanyang Technological University,}\\
\centerline{\small 62 Nanyang Drive, 637459, Singapore}\\
\centerline{\small $^{b}$Department of Chemical \& Materials Engineering, University of Alberta,}\\
\centerline{\small Edmonton, AB T6G 1H9, Canada}\\
\centerline{\small $^{c}$School of Electrical and Electronic Engineering, Nanyang Technological University,}\\
\centerline{\small 50 Nanyang Avenue, 639798, Singapore}
}
\begin{document}

\date{}
\maketitle
\setstretch{1.45}

\begin{abstract}
In this paper, partition-based distributed state estimation of general linear systems is considered. A distributed moving horizon state estimation scheme is developed via decomposing the entire system model into subsystem models and partitioning the global objective function of centralized moving horizon estimation (MHE) into local objective functions. The subsystem estimators of the distributed scheme that are required to be executed iteratively within each sampling period are designed based on MHE. Two distributed MHE algorithms are proposed to handle the unconstrained case and the case when hard constraints on states and disturbances, respectively. Sufficient conditions on the convergence of the estimates and the stability of the estimation error dynamics for the entire system are derived for both cases. A benchmark reactor-separator process example is introduced to illustrate the proposed distributed state estimation approach.
\end{abstract}

\noindent{\bf Keywords:} Distributed state estimation, partition-based framework, moving horizon estimation (MHE), iterative evaluation

\section{Introduction}
The manufacturing industry has undergone a significant transformation in recent years, marked by a rapid increase in the scale, complexity, and level of integration of manufacturing systems and industrial processes. As a result, the conventional centralized and decentralized decision-making paradigms are no longer sufficient to meet the demands of the industries for advanced control solutions for safe, flexible, and sustainable process operation, and consistent and profitable production \cite{christofides2013distributed, daoutidis2018integrating, daoutidis2016sustainability}. This trend has highlighted the necessity of using distributed architecture to develop next-generation flexible and scalable decision-making solutions for large-scale industrial systems with tightly integrated physical components. Within a distributed framework, multiple computing agents are deployed to perform process monitoring and/or collaboratively make decisions on process operation through real-time communication. This way, the major advantages of the centralized and the decentralized methods are inherited, while the associated limitations are bypassed. In particular, a higher level of fault tolerance, maintenance flexibility as well as computational efficiency can be achieved as compared to the centralized methods, and the overall performance will be significantly enhanced as compared to the structurally flexible decentralized counterpart due to more appropriate treatment of the dynamic interactions which are in general overlooked in decentralized designs \cite{christofides2013distributed, daoutidis2016sustainability, yin2018forming}.

Distributed state estimation and distributed control are two main components of a complete distributed decision-making system. It is worth pointing out that, while more research attention has been attached to the development of new distributed control algorithms (in particular, distributed model predictive control in process control \cite{liu2009distributed, chen2022barrier, chen2021cyber, tang2021coordinating, pourkargar2019distributed}), the dual problem $-$ distributed state estimation $-$ is equivalently important. The multiple local estimators of distributed state estimation take advantage of online measurements from the available hardware sensors to collaboratively infer full-state information for a process \cite{battistelli2016stability, haber2013moving, liu2022distributed, yin2020distributed}; good estimates provided by distributed state estimation are critical for distributed control to make informed decisions for appropriate operation intervention. Distributed state estimation methods can be divided into two categories: 1) consensus-based distributed estimation, and 2) partition-based distributed estimation. Within a consensus-based framework, local estimators are typically designed based on a global model, and they estimate the same quality variables of the entire system \cite{yin2021consensus, olfati2007distributed, farina2012distributed, das2016consensus+, battistelli2018distributed}. While in a partition-based design, the local estimators are designed based on decomposed subsystem models, and each estimator accounts for an exclusive subset of state variables (i.e., the states of the corresponding subsystem model). Partition-based distributed state estimation can provide improved fault tolerance, computational efficiency, and organizational flexibility of advanced process control systems, and it is investigated in the current work.

There have been some results on partition-based state estimation in the existing literature \cite{farina2009moving, schneider2015iterative, vadigepalli2003distributed}. In \cite{yin2018forming, zhang2013distributed, yin2017distributed}, complex nonlinear processes were decomposed into smaller interacting subsystems, based on which local estimators were designed. These estimators are evaluated in a non-iterative manner (that is, the estimation algorithm is executed only once as new measurements are made available). While this type of method further facilitates the reduction in the usage of computing resources, the estimation results may be noticeably less accurate than those of the centralized counterpart. In \cite{schneider2015convergence, schneider2017solution, yin2022event}, distributed moving horizon estimation (DMHE) algorithms that require iterative executions of the local estimators within each sampling time were proposed, such that the estimates may converge to the estimates of linear centralized MHE. In particular, in \cite{schneider2015convergence}, the objective function of centralized MHE is decomposed into multiple individual functions. Then, each decomposed individual function is assigned a sensitivity term to form a subsystem's objective function for local MHE. By selecting appropriate local estimator parameters, the distributed estimation scheme can be made asymptotically stable, and the state estimates can converge to those of the centralized MHE. In \cite{schneider2017solution}, an iterative partition-based DMHE algorithm that can be used to develop stable distributed estimation schemes based on any configurations of the subsystem models was proposed. More relevant results on partition-based distributed state estimation with iterative executions of the local estimators can be found in \cite{farina2009moving, schneider2015iterative, farina2010moving}, and the references therein.
The major advantage of these iterative distributed state estimation approaches lies in that they have the potential to provide estimates convergent to the centralized counterpart; this is favorable when more accurate estimates or faster convergence are needed \cite{schneider2015convergence, schneider2017solution}.

While the distributed estimation approaches with iterative evaluations of the local estimators have their inherent advantages over the non-iterative counterparts, in most of the existing iterative designs (including the representative methods in \cite{schneider2015convergence, schneider2017solution}), penalty on system disturbances is absent from the local objective functions, leading to difficulty in handling constraints on system disturbances which are common in real processes. It is worthwhile to consider these disturbances when formulating the designs of the local estimators of a DMHE scheme. In \cite{farina2010moving}, an initial attempt was made on incorporating system disturbances into the objective functions of MHE-based estimators. In this approach, each MHE-based estimator only receives and utilizes the sensor measurements of the corresponding local subsystem, and the estimators are evaluated in a non-iterative fashion.
In \cite{schneider2013iterative}, penalty on system disturbances was also included in the objective function of each MHE-based estimator, and the convergence and the optimality of the proposed partition-based distributed estimation algorithm were proven. At the same time, it is noted that \cite{schneider2013iterative} did not address the constrained case, and the stability of the estimation error dynamics was not studied.

Based on the above observations, we aim to extend the approach in \cite{schneider2015convergence} to develop a partition-based iterative DMHE scheme for medium- to large-scale linear systems, of which the local objective functions for the subsystem estimators penalize both system disturbances and measurement noise for guaranteed estimation performance. Moving horizon estimation converts the state estimation problem into optimization, such that optimal estimates can be obtained in the presence of various constraints on variables \cite{haseltine2005critical, rao2003constrained, liu2013moving, https://doi.org/10.48550/arxiv.2206.10397, valipour2022extended}. In this work, the entire system is partitioned into subsystems that interact with each other. An individual objective function incorporating penalties on both subsystem disturbances and measurement noise is formulated, and local MHE-based estimators are developed to provide estimates of the subsystem states in a collaborative manner. Both the unconstrained case and constrained case (when hard constraints on system states and/or disturbances) are addressed by proposing two different DMHE formulations. The stability of the entire distributed estimation scheme is proven under the two case scenarios. The proposed method is illustrated using a simulated chemical process, and the results confirm the effectiveness and superiority of the proposed method.
Partial preliminary results of this work were submitted as a conference paper \cite{IFAC}. In addition to the results on unconstrained DMHE reported in the conference version \cite{IFAC}, the current paper also presents the DMHE formulation for the constrained case,  and proves the stability of estimation error dynamics. Accordingly, this paper presents the simulation results for the proposed method considering the case when constraints on decision variables are present. Also, as compared to \cite{IFAC}, this paper presents extended explanations and discussions of the proposed algorithms, and more detailed derivations for the stability results.

\section{Preliminaries}
\subsection{Notation}
$\lambda_{\max}(A)$ and $\lambda_{\min}(A)$ denote the largest and the smallest eigenvalues of matrix $A$, respectively. ${\rm{col}}(x_1, x_2, \ldots, x_n)$ is a column vector consisting of $x_1, x_2, \ldots, x_n$. $\{x\}_{a}^{b}$ represents a column vector $[x_{a}, x_{a+1},\ldots, x_{b}]$. The operators $\|\cdot\|$ and $\|\cdot\|_{P}$ represent Euclidean norm and the weighted Euclidean norm of a scalar or a vector, respectively; they are computed as $\|x\|^2=x^{\mathrm{T}}x$ and $\|x\|^2_{P}=x^{\mathrm{T}}Px$. For scalars $a_i$, $i=1,2,\ldots,n$, $\text{diag}\left\{a_1,a_2,\ldots,a_n\right\}$ is a diagonal matrix where the elements on the main diagonals are $a_i$, $i=1,\ldots,n$. For matrices $A_i$, $i=1,\ldots,n$, $\text{diag}\left\{A_1,A_2,\ldots,A_n\right\}$ is a block diagonal matrix. $[A_{ij}]$ is a block matrix where $A_{ij}$ is the submatrix in the $i$th row and $j$th column. $[x]^{+}$ denotes the orthogonal projection of a vector $x$ onto the convex set $\mathbb{X}$, which is computed by $[x]^{+} = \mathrm{arg}\min_{y\in \mathbb{X}}\|y-x\|$.

\subsection{System description}
We consider general linear systems that can be partitioned into $n$ interconnected subsystems. The model of subsystem $i$, $i\in \mathbb{N}=\{1,2,\ldots,n\}$, is described by the following state-space form:
\begin{subequations}
  \begin{align}
   x_{k+1}^{i}&=A_{ii}x_{k}^{i}+\sum_{j\in\mathbb{N}\setminus\{i\}}A_{ij}x_{k}^{j}+B_{ii}u_{k}^{i}+\sum_{j\in\mathbb{N}\setminus\{i\}}B_{ij}u_{k}^{j}+w_{k}^{i}\label{modeli1}\\
   y_{k}^{i}&=C_{ii}x_{k}^{i}+v_{k}^{i}\label{modeli2}
\end{align}
\end{subequations}
where $x_{k}^{i}\in\mathbb{R}^{n_{x^i}}$ is the state vector for subsystem $i$ at time instant $k$; $u_{k}^{i}\in\mathbb{R}^{n_{u^i}}$ is the vector of known inputs which consist of the control inputs and/or known disturbances for subsystem $i$; $y_{k}^{i}\in\mathbb{R}^{n_{y^i}}$ is the vector of sensor measurements for subsystem $i$; $w_{k}^{i}\in\mathbb{R}^{n_{x^i}}$ and $v_{k}^{i}\in\mathbb{R}^{n_{y^i}}$ are the vectors of system disturbances and sensor measurement noise for subsystem $i$, respectively; $A_{ii}$, $A_{ij}$, $B_{ii}$, $B_{ij}$, and $C_{ii}$ are subsystem matrices of compatible dimensions.

\subsection{Centralized moving horizon estimation}\label{section:2.3}
The subsystem models in Equations~\eqref{modeli1} and \eqref{modeli2} constitute the composite model for the entire linear system in the following form:
\begin{subequations}\label{model}
  \begin{align}
   x_{k+1} & =Ax_{k}+Bu_{k}+w_{k} \label{model1}\\
   y_{k} & =Cx_{k}+v_{k} \label{model2}
\end{align}
\end{subequations}
where $x_{k}={\rm{col}}(x_{k}^1, x_{k}^2,\ldots, x_{k}^{n})\in\mathbb{R}^{n_x}$, $u_{k}={\rm{col}}(u_{k}^1, u_{k}^2,\ldots, u_{k}^{n})\in\mathbb{R}^{n_{u}}$, $y_{k}={\rm{col}}(y_{k}^1, y_{k}^2,\ldots, y_{k}^{n})\in\mathbb{R}^{n_{y}}$, $w_{k}={\rm{col}}(w_{k}^1, w_{k}^2,\ldots, w_{k}^{n})\in\mathbb{R}^{n_{x}}$ and $v_{k}={\rm{col}}(v_{k}^1, v_{k}^2,\ldots, v_{k}^{n})\in\mathbb{R}^{n_{y}}$. $A=[A_{ij}]$; $B =[B_{ij}]$; $C=[C_{ij}]$ with $C_{ij}=0$ if $i\neq j$, $\forall~i, j\in \mathbb{N}$.


With \eqref{model1} and \eqref{model2}, a centralized moving horizon estimation (CMHE) algorithm, which is an analog of the design in \cite{rao2001constrained}, is reviewed. This CMHE method is the basis for designing the local estimators of the distributed estimation scheme.

At each time instant $k=N, N+1, \ldots$, CMHE solves the following optimization problem
\begin{subequations}\label{CMHE_opt}
\begin{align}
     &\quad\quad\quad\min_{\hat{x}_{k-N|k},~\{\hat{w}\}_{k-N}^{k-1}}\Phi_{k}  \\
     &{\rm{s.t.}}~ \hat{x}_{t+1|k} = A\hat{x}_{t|k}+Bu_{t}+\hat{w}_{t},~t = k-N,\ldots, k-1\\
     &\quad\quad\,\,~ \hat{v}_{t} = y_{t}-C\hat{x}_{t|k}, ~t = k-N ,\ldots, k
  \end{align}
\end{subequations}
%
where the objective function is as follows:
\begin{equation}\label{CMHE_obj}
 \Phi_k  = \frac{1}{2}\Bigg(\|\hat{x}_{k-N|k}-\bar{x}_{k-N}\|_{P^{-1}}^{2}+\sum_{j=k-N}^{k-1}\|\hat{w}_{j}\|^2_{Q^{-1}}+\sum_{j=k-N}^{k}\|\hat{v}_{j}\|^2_{R^{-1}}\Bigg)
\end{equation}
In \eqref{CMHE_obj}, $N\geq 1$ is the length of the estimation horizon, $\hat{x}_{k-N|k}$ is an estimate of the state at time instant $k-N$ calculated at the current time instant $k$, and $\bar{x}_{k-N}$ serves as  {\em a priori} estimate of state $x_{k-N}$, which can be computed following $\bar{x}_{k-N}=A\hat{x}_{k-N-1|k-1}+Bu_{k-N-1}$. $P$, $Q$, and $R$ are weighting matrices that are typically made positive definite.

Before proceeding further, we define three matrices and present the main assumptions of this work as follows:
\begin{align*}
  O & = \left[C^{\mathrm{T}},(CA)^{\mathrm{T}},\ldots,(CA^{N})^{\mathrm{T}}\right]^{\mathrm{T}}, \\
  \Gamma & =
  \left[
    \begin{array}{ccccc}
      0 & 0 & \ldots & 0 & 0 \\
              C & 0 &  & \vdots & \vdots \\
              CA & C & \ddots & \vdots & \vdots\\
              \vdots & \vdots & \ddots & 0 & \vdots \\
              \vdots & \vdots &  & C & 0 \\
              CA^{N-1} & CA^{N-2} & \ldots & CA & C \\
    \end{array}
  \right],~
  \Lambda = \left[\begin{array}{ccccc}
      0 & 0 & \ldots & 0 & 0 \\
              CB & 0 &  & \vdots & \vdots \\
              CAB & CB & \ddots & \vdots & \vdots\\
              \vdots & \vdots & \ddots & 0 & \vdots \\
              \vdots & \vdots &  & CB & 0 \\
              CA^{N-1}B & CA^{N-2}B & \ldots & CAB & CB\\
    \end{array}
  \right].
\end{align*}
\begin{assumption}\label{assume1}
  $w_k$ and $v_k$ are stochastic disturbances and noise of which the statistics are unknown. Moreover, $\mathbb{W}$ and $\mathbb{V}$ that contain $w_k$ and $v_k$, respectively, are compact as described below:
  \begin{align*}
     \mathbb{W} & =\{w_k\in \mathbb{R}^{n_x},~{\rm{s.t}}.~\|w_k\|\leq r_w\} \\
    \mathbb{V} & =\{v_k\in \mathbb{R}^{n_y},~{\rm{s.t}}.~\|v_k\|\leq r_v\}
  \end{align*}
  with $r_w$ and $r_v$ being positive scalars.
\end{assumption}
\begin{assumption}\label{assume2}
The composite system \eqref{model} is observable in $N+1$ steps, that is,  $\left[C^{\mathrm{T}},(CA)^{\mathrm{T}},\ldots,(CA^{N})^{\mathrm{T}}\right]^{\mathrm{T}}$ is full-rank.
\end{assumption}


\section{Two DMHE designs}
A schematic of the partition-based distributed scheme is shown in Figure \ref{schematic:PMHE}. The entire system is decomposed into interacting subsystems. An estimator is developed based on the corresponding subsystem model to handle state estimation for the same subsystem. The estimators for the subsystems are developed based on MHE; they exchange information in a real-time manner to coordinate the state estimates for the subsystems. The state estimates and measurements from other subsystem are shared through a communication network, and all these information are utilized in each local estimator. In this section,
we propose two MHE-based iterative distributed state estimation solutions in which the objective functions of the local estimators are established: 1) DMHE-1 for the case scenario when there are no constraints; 2) DMHE-2 for the case scenario when the states and/or the disturbances of the subsystems are bounded. Both solutions require that the local MHE-based estimators are executed iteratively within every sampling period as new measurements become available. Either of the solutions can be adopted to develop the distributed estimation scheme in Figure \ref{schematic:PMHE}, depending on whether constraints need to be imposed or not.

\begin{figure}
  \centering
  \includegraphics[width=0.65\textwidth]{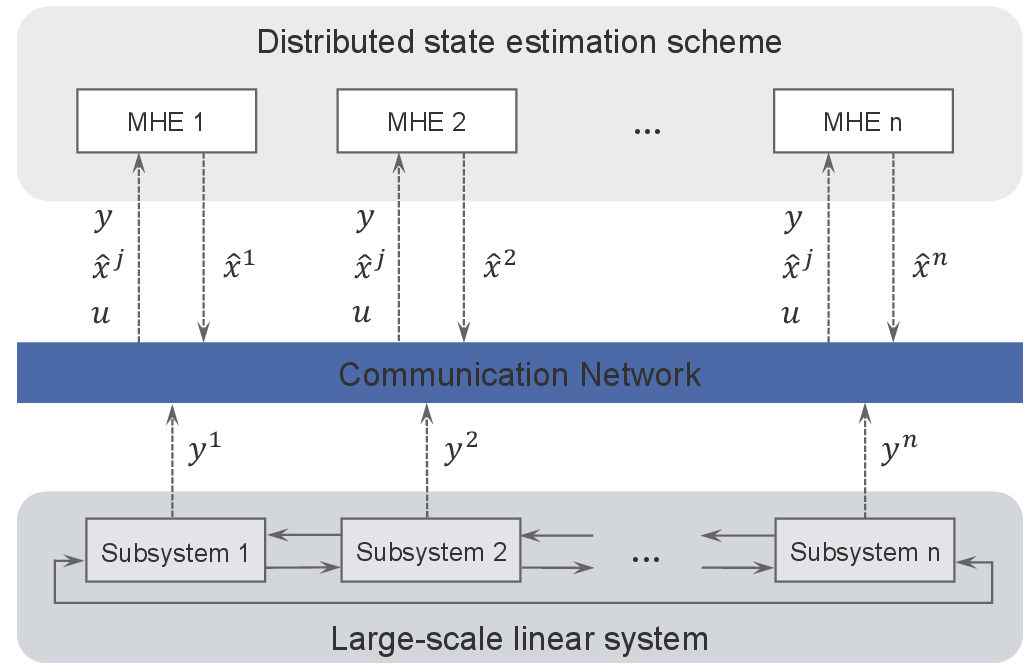}
  \caption{A schematic diagram of the partition-based distributed moving horizon estimation scheme.}\label{schematic:PMHE}
\end{figure}
\subsection{Construction of partition-based objective functions}
In this subsection, we construct partition-based objective functions, which will be used as the local estimators of the two DMHE solutions of this work.
First, the global objective function \eqref{CMHE_obj} of CMHE is partitioned into $\Phi^i_k$, $i\in\mathbb{N}$, such that $\Phi_{k}=\sum_{i=1}^{n}\Phi^{i}_{k}$. Each partitioned function has the following form:
\begin{align}\label{DMHE_obj1}
  \Phi_{k}^{i}= &\frac{1}{2}\Big(\big\|\hat{x}_{k-N|k}^{i}-\bar{x}_{k-N}^{i}\big\|_{P^{-1}_{i}}^{2}+\sum_{j=k-N}^{k-1}\|\hat{w}^{i}_{j}\|^2_{Q_{i}^{-1}}+\sum_{j=k-N}^{k}\|\hat{v}^{i}_{j}\|^2_{R_{i}^{-1}}\Big)
\end{align}
where $\hat{x}^{i}_{k-N|k}$ is an estimate of state $x^{i}_{k-N}$ of the subsystem $i$ calculated at time $k$, and $\bar{x}_{k-N}^{i}$ is a one-step-ahead prediction of the state of the $i$th subsystem, which is computed as:
\begin{equation}\label{priori}
  \bar{x}^{i}_{k-N} = A_{ii}\hat{x}_{k-N-1|k-1}^{i}+\sum_{j\in\mathbb{N}\setminus\{i\}}A_{ij}\hat{x}_{k-N-1|k-1}^{j}+
    B_{ii}u_{k-N-1}^{i}+\sum_{j\in\mathbb{N}\setminus\{i\}}B_{ij}u_{k-N-1}^{j}
\end{equation}
It is noted that \eqref{DMHE_obj1} only takes advantage of the sensor measurements of subsystem $i$. The measured outputs of the interacting subsystems also contain valuable information which may be helpful for reconstructing the state of the current subsystem. Based on this consideration, the sensor measurement information of the interacting subsystems of subsystem $i$ is added to the decomposed function $\Phi_{k}^{i}$ in \eqref{DMHE_obj1} to form an individual objective function for the MHE-based estimator for the $i$th subsystem as follows:
\begin{equation}\label{PMHE_genral}
\bar{\Phi}_{k}^{i}=\Phi_{k}^{i}+\frac{1}{2}\sum_{l\in\mathbb{N}\setminus\{i\}}\sum_{j=k-N}^{k}\|\hat{v}^{l}_{j}\|^2_{R_{l}^{-1}}
\end{equation}

The MHE-based estimators of the distributed estimation scheme are required to be executed for $p_{\mathrm{max}}$ iteration steps within each sampling period. For $p=1,2,\ldots, p_{\max}$, the objective function for the $p$th iteration step of the estimator for subsystem $i$, given the sensor measurements up to sampling instant $k$, is formulated as:
\begin{align}\label{PMHE_iter}
  \bar{\Phi}_{k}^{i,[p]}  =& \frac{1}{2}\Big(\big\|\hat{x}_{k-N|k}^{i,[p]}-\bar{x}_{k-N}^{i}\big\|_{P^{-1}_{i}}^{2}+\sum_{j=k-N}^{k-1}\big\|\hat{w}^{i,[p]}_{j}\big\|^2_{Q_{i}^{-1}}+\sum_{l\in\mathbb{N}}\Big\|\{y^{l}\}_{k-N}^{k}-O^{l}_{[:,i]}\hat{x}_{k-N|k}^{i,[p]}\\
&\quad-\sum_{j\in\mathbb{N}\setminus \{i\}}O^{l}_{[:,j]}\hat{x}_{k-N|k}^{j,[p-1]}-\Lambda^{l}\{u\}_{k-N}^{k-1}-\Gamma^{l}_{[:,i]}\{\hat{w}^{i,[p]}\}_{k-N}^{k-1}-\sum_{j\in\mathbb{N}\setminus \{i\} }\Gamma^{l}_{[:,j]}\{\hat{w}^{j,[p-1]}\}_{k-N}^{k-1}\Big\|^{2}_{\mathbf{R_{l}}^{-1}}\bigg)\nonumber
\end{align}
where $O^{i}$, $\Lambda^{i}$, and $\Gamma^{i}$ are matrices consisting of the rows of matrices $O$, $\Lambda$, and $\Gamma$ with respect to the measurements of subsystem $i$, respectively; $O_{[:,i]}$ is a matrix consisting of the columns of $O$ that are associated with $x^i$; $\Gamma_{[:,i]}$ is a matrix consisting of the columns of $\Gamma$ that are associated with $w^i$; $\mathbf{R_{l}} = \text{diag}\{\underbrace{R_{l}, R_{l}, \ldots, R_{l}}_{N+1}\}$.
At each new sampling instant $k$, the estimator for the $i$th subsystem is initialized with $\hat{x}_{k-N|k}^{i,[0]}=\bar{x}_{k-N}^{i}$. As an analog of the {\em a priori} state prediction in \eqref{priori}, $\bar{x}_{k-N}^{i}$, in \eqref{PMHE_iter} is given by
\begin{equation}\label{pmax}
 \bar{x}^{i}_{k-N} = A_{ii}\hat{x}_{k-N-1|k-1}^{i,[p_{\max}]}+\sum_{j\in\mathbb{N}\setminus\{i\}}A_{ij}\hat{x}_{k-N-1|k-1}^{j,[p_{\max}]}+B_{ii}u_{k-N-1}^{i}+\sum_{j\in\mathbb{N}\setminus\{i\}}B_{ij}u_{k-N-1}^{j}
\end{equation}
for $k>N+1$, where $\hat{x}_{k-N-1|k-1}^{i,[p_{\max}]}$ is the optimal estimate of $x_{k-N-1}^{i}$ provided by the local estimator $i$ in the last iteration step at sampling instant $k-1$.
\subsection{Design of DMHE-2}\label{design_2}
In this subsection, we present the design of the MHE-based estimator for DMHE-2 for the case when hard constraints on system states and disturbances are present.

By leveraging the configured individual objective function in \eqref{PMHE_iter}, the optimal estimate of the state of each subsystem $i$, $i\in\mathbb N$, can be uniquely determined via solving the optimization problem below:
\begin{subequations}\label{eq:10}
\begin{align}
&\quad\quad\quad\quad\quad\quad\quad\quad\quad\min_{\hat{x}^{i,[p]}_{k-N|k},\{\hat{w}^{i,[p]}\}_{k-N}^{k-1}} \bar{\Phi}_{k}^{i,[p]} \label{CPMHE1}\\
&\text{s.t.}\quad  \hat{x}_{t+1|k}^{i}=A_{ii}\hat{x}^{i}_{t|k}+\sum_{j\in\mathbb{N}\setminus\{i\}}A_{ij}\hat{x}_{t|k}^{j}+B_{ii}u_{t}^{i}+\sum_{j\in\mathbb{N}\setminus\{i\}}B_{ij}u_{t}^{j}+\hat{w}^{i}_{t},\label{CPMHE2}\\
& ~\quad\quad\qquad\, t= k-N,\ldots,k-1\nonumber\\[0.1em]
&\quad\quad\qquad\,\hat{x}^{i}\in\mathbb{X}^i,~\hat{w}^{i}\in\mathbb{W}^i,
\end{align}
\end{subequations}
where $\mathbb{X}^i$ and $\mathbb{W}^i$ are convex sets.

By substituting the subsystem model constraint \eqref{CPMHE2} into objective function \eqref{CPMHE1}, the design of the $i$th estimator of DMHE-2 is formulated as follows:
\begin{subequations}\label{DMHE2}
\begin{align}
&\min_{\hat{x}^{i,[p]}_{k-N|k},\{\hat{w}^{i,[p]}\}_{k-N}^{k-1}} \bar{\Phi}_{k}^{i,[p]} \\[0.2em]
&\text{s.t.}~\hat{x}^{i}\in\mathbb{X}^i, ~\hat{w}^{i}\in\mathbb{W}^i\label{constraints1}
\end{align}
where
\begin{align}\label{obj_p}
\bar{\Phi}_{k}^{i,[p]} &=\frac{1}{2}\bigg(\big\|\hat{x}_{k-N|k}^{i,[p]}-\bar{x}_{k-N}^{i}\big\|_{P^{-1}_{i}}^{2}+\big\|\{\hat{w}^{i,[p]}\}_{k-N}^{k-1}\big\|^{2}_{\mathbf{Q}_{i}^{-1}} +\Big\|\{y\}_{k-N}^{k}-O_{[:,i]}\hat{x}_{k-N|k}^{i,[p]}\nonumber\\
&\quad-\sum_{l\in\mathbb{N}\setminus \{i\}}O_{[:,l]}\hat{x}_{k-N|k}^{l,[p-1]}-\Lambda\{u\}_{k-N}^{k-1}-\Gamma_{[:,i]}\{\hat{w}^{i,[p]}\}_{k-N}^{k-1}-\sum_{l\in\mathbb{N}\setminus \{i\} }\Gamma_{[:,l]}\{\hat{w}^{l,[p-1]}\}_{k-N}^{k-1}\Big\|^{2}_{\mathbf{R}^{-1}}\bigg).\nonumber\\
\end{align}
\end{subequations}
In \eqref{obj_p}, $\mathbf{Q}_i = \text{diag}\{\underbrace{Q_i, Q_i, \ldots, Q_i}_{N}\}$; $\mathbf{R} = \text{diag}\{\underbrace{R, R, \ldots, R}_{N+1}\}$; 
The definitions of $O$, $\Lambda$, and $\Gamma$ are given in Section \ref{section:2.3}.

\subsection{Design of DMHE-1}\label{design_1}

When constraints do not need to be imposed (e.g., when disturbances and noise can be approximated by Gaussian distribution thus are considered unbounded), the hard constraints in \eqref{constraints1} can be excluded from the optimization problem for each MHE-based estimator of DMHE-2. In this case, DMHE-2 reduces to the case of unconstrained DMHE, which is called DMHE-1 in this work. Without the constraints in \eqref{constraints1}, we are able to obtain the analytical recursive expression for each local MHE estimator of DMHE-1.  Specifically, we calculate the first-order partial derivatives of the objective function \eqref{obj_p} with respect to the subsystem state estimate $\hat{x}^{i,[p]}_{k-N|k}$ and the estimate of the subsystem disturbances $\{\hat{w}^{i,[p]}\}_{k-N}^{k-1}$, and the following equalities are obtained:
\begin{subequations}\label{first}
  \begin{align}
  \frac{\partial\bar{\Phi}_{k}^{i,[p]}}{\partial\hat{x}^{i,[p]}_{k-N|k}} &=  \Big(P_{i}^{-1}+O_{[:,i]}^{\mathrm{T}}\mathbf{R}^{-1}O_{[:,i]}\Big)\hat{x}^{i,[p]}_{k-N|k}+ \sum_{l\in\mathbb{N}\setminus \{i\} }O_{[:,i]}^{\mathrm{T}}\mathbf{R}^{-1}O_{[:,l]}\hat{x}^{l, [p-1]}_{k-N|k}\nonumber\\
  &\quad+O^{\mathrm{T}}_{[:,i]}\mathbf{R}^{-1}\Gamma_{[:,i]}\{\hat{w}^{i,[p]}\}^{k-1}_{k-N}+\sum_{l\in\mathbb{N}\setminus \{i\} }O^{\mathrm{T}}_{[:,i]}\mathbf{R}^{-1}\Gamma_{[:,l]}\{\hat{w}^{l,[p-1]}\}_{k-N}^{k-1}\label{first_11}\\
   & \quad-P_{i}^{-1}\bar{x}^{i}_{k-N}+O^\mathrm{T}_{[:,i]}\mathbf{R}^{-1}\Lambda\{u\}_{k-N}^{k-1}-O^{\mathrm{T}}_{[:,i]}\mathbf{R}^{-1}\{y\}^{k}_{k-N}\nonumber\\
   &=0\nonumber\\
 \frac{\partial\bar{\Phi}_{k}^{i,[p]}}{\partial\{\hat{w}^{i,[p]}\}_{k-N}^{k-1}}&= \Big(\mathbf{Q}_{i}^{-1}+\Gamma^{\mathrm{T}}_{[:,i]}\mathbf{R}^{-1}\Gamma_{[:,i]}\Big)\{\hat{w}^{i,[p]}\}^{k-1}_{k-N}+ \sum_{l\in\mathbb{N}\setminus \{i\} }\Gamma^{\mathrm{T}}_{[:,i]}\mathbf{R}^{-1}\Gamma_{[:,l]}\{\hat{w}^{l,[p-1]}\}_{k-N}^{k-1}\nonumber \\
 &\quad+\Gamma^{\mathrm{T}}_{[:,i]}\mathbf{R}^{-1}O_{[:,i]}\hat{x}^{i,[p]}_{k-N|k}+\sum_{l\in\mathbb{N}\setminus \{i\} }\Gamma^{\mathrm{T}}_{[:,i]}\mathbf{R}^{-1}O_{[:,l]}\hat{x}^{l,[p-1]}_{k-N|k}\label{first_21}\\
   &\quad+\Gamma^{\mathrm{T}}_{[:,i]}\mathbf{R}^{-1}\Lambda\{u\}_{k-N}^{k-1}-\Gamma^{\mathrm{T}}_{[:,i]}\mathbf{R}^{-1}\{y\}_{k-N}^{k}\nonumber\\
   &= 0\nonumber
\end{align}
\end{subequations}
Based on \eqref{first_11} and \eqref{first_21}, two equations that characterize the optimal estimates of the subsystem state and the sequence of the estimates of the subsystem disturbances are given as follows:
\begin{subequations}\label{derivative}
  \begin{align}
 \Big(P_{i}^{-1}+O_{[:,i]}^{\mathrm{T}}\mathbf{R}^{-1}O_{[:,i]}\Big)\hat{x}^{i,[p]}_{k-N|k} &= -O^{\mathrm{T}}_{[:,i]}\mathbf{R}^{-1}\Gamma_{[:,i]}\{\hat{w}^{i,[p]}\}^{k-1}_{k-N} -\sum_{l\in\mathbb{N}\setminus \{i\} }O_{[:,i]}^{\mathrm{T}}\mathbf{R}^{-1}O_{[:,l]}\hat{x}^{l, [p-1]}_{k-N|k} \nonumber\\
  & \quad    -\sum_{l\in\mathbb{N}\setminus \{i\} }O^{\mathrm{T}}_{[:,i]}\mathbf{R}^{-1}\Gamma_{[:,l]}\{\hat{w}^{l,[p-1]}\}_{k-N}^{k-1}+P_{i}^{-1}\bar{x}^{i}_{k-N}\label{derivative_1}\\
   & \quad-O^T_{[:,i]}\mathbf{R}^{-1}\Lambda\{u\}_{k-N}^{k-1}+O^{\mathrm{T}}_{[:,i]}\mathbf{R}^{-1}\{y\}^{k}_{k-N}\nonumber\\
\Big(\mathbf{Q}_{i}^{-1}+\Gamma^{\mathrm{T}}_{[:,i]}\mathbf{R}^{-1}\Gamma_{[:,i]}\Big)\{\hat{w}^{i,[p]}\}^{k-1}_{k-N} &=  -\Gamma^{\mathrm{T}}_{[:,i]}\mathbf{R}^{-1}O_{[:,i]}\hat{x}^{i,[p]}_{k-N|k}  - \sum_{l\in\mathbb{N}\setminus \{i\} }\Gamma^{\mathrm{T}}_{[:,i]}\mathbf{R}^{-1}\Gamma_{[:,l]}\{\hat{w}^{l,[p-1]}\}_{k-N}^{k-1}\nonumber \\
 &\quad-\sum_{l\in\mathbb{N}\setminus \{i\} }\Gamma^{\mathrm{T}}_{[:,i]}\mathbf{R}^{-1}O_{[:,l]}\hat{x}^{l,[p-1]}_{k-N|k}\label{derivative_2}\\
   &\quad-\Gamma^{\mathrm{T}}_{[:,i]}\mathbf{R}^{-1}\Lambda\{u\}_{k-N}^{k-1}+\Gamma^{\mathrm{T}}_{[:,i]}\mathbf{R}^{-1}\{y\}_{k-N}^{k}\nonumber
\end{align}
\end{subequations}

Theorem \ref{theorem:0} below presents the solution to \eqref{derivative}; this solution is the recursive expression for each estimator of DMHE-1.
\begin{theorem}\label{theorem:0}
In the $p$th iteration step of sampling instant $k$, given $\{y\}_{k-N}^{k}$, $\{u\}_{k-N}^{k-1}$, $\bar{x}_{k-N}^{i}$, and $\hat{x}_{k-N|k}^{l,[p-1]}$, $l\in\mathbb{N}\setminus \{i\}$, the $i$th estimator of DMHE-1 computes the optimal estimates of the subsystem state and disturbances as below:
\begin{subequations}\label{solution}
\begin{align}
  \hat{x}^{i,[p]}_{k-N|k} &= \Big(P_{i}^{-1}+O_{[:,i]}^{\mathrm{T}}\big(\mathbf{R}+\Gamma_{[:,i]}\mathbf{Q}_{i}\Gamma_{[:,i]}^{\mathrm{T}}\big)^{-1}O_{[:,i]}\Big)^{-1}\\
  &\quad\times\Big(c_{1}-O^{\mathrm{T}}_{[:,i]}\mathbf{R}^{-1}\Gamma_{[:,i]}\big(\mathbf{Q}_{i}^{-1}+\Gamma^{\mathrm{T}}_{[:,i]}\mathbf{R}^{-1}\Gamma_{[:,i]}\big)^{-1}c_{2}\Big)\nonumber\\
  \{\hat{w}^{i,[p]}\}^{k-1}_{k-N}&= \Big(\mathbf{Q}_{i}^{-1}+\Gamma_{[:,i]}^{\mathrm{T}}\big(\mathbf{R}+O_{[:,i]}P_{i}O_{[:,i]}^{\mathrm{T}}\big)^{-1}\Gamma_{[:,i]}\Big)^{-1}\\
  &\quad\times\Big(-\Gamma^{\mathrm{T}}_{[:,i]}\mathbf{R}^{-1}O_{[:,i]}\big(P_{i}^{-1}+O_{[:,i]}^{\mathrm{T}}\mathbf{R}^{-1}O_{[:,i]}\big)^{-1}c_{1}+c_{2}\Big)\nonumber
\end{align}
{\text{where}}
\begin{align*}
  c_{1} &=-\sum_{l\in\mathbb{N}\setminus \{i\} }O_{[:,i]}^{\mathrm{T}}\mathbf{R}^{-1}O_{[:,l]}\hat{x}^{l, [p-1]}_{k-N|k}-\sum_{l\in\mathbb{N}\setminus \{i\} }O^{\mathrm{T}}_{[:,i]}\mathbf{R}^{-1}\Gamma_{[:,l]}\{\hat{w}^{l,[p-1]}\}_{k-N}^{k-1}\\
  & \quad    +P_{i}^{-1}\bar{x}^{i}_{k-N}-O^T_{[:,i]}\mathbf{R}^{-1}(\Lambda\{u\}_{k-N}^{k-1}-\{y\}^{k}_{k-N})\\
   c_{2} &=-\sum_{l\in\mathbb{N}\setminus \{i\} }\Gamma^{\mathrm{T}}_{[:,i]}\mathbf{R}^{-1}O_{[:,l]}\hat{x}^{l,[p-1]}_{k-N|k} - \sum_{l\in\mathbb{N}\setminus \{i\} }\Gamma^{\mathrm{T}}_{[:,i]}\mathbf{R}^{-1}\Gamma_{[:,l]}\{\hat{w}^{l,[p-1]}\}_{k-N}^{k-1} \\
 &\quad-\Gamma^{\mathrm{T}}_{[:,i]}\mathbf{R}^{-1}(\Lambda\{u\}_{k-N}^{k-1}-\{y\}_{k-N}^{k})
\end{align*}
\end{subequations}
\end{theorem}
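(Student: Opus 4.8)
The plan is to read the two stationarity conditions \eqref{derivative_1} and \eqref{derivative_2} as a single coupled linear system in the current-iteration unknowns $\hat{x}^{i,[p]}_{k-N|k}$ and $\{\hat{w}^{i,[p]}\}_{k-N}^{k-1}$, treating $c_1$ and $c_2$ as known right-hand sides that collect every term depending only on the data $\{y\}_{k-N}^{k}$, $\{u\}_{k-N}^{k-1}$, $\bar{x}^{i}_{k-N}$, and the frozen neighbor variables $\hat{x}^{l,[p-1]}_{k-N|k}$, $\{\hat{w}^{l,[p-1]}\}_{k-N}^{k-1}$. In block form \eqref{derivative} reads
\[
\begin{bmatrix} P_i^{-1}+O_{[:,i]}^{\mathrm{T}}\mathbf{R}^{-1}O_{[:,i]} & O_{[:,i]}^{\mathrm{T}}\mathbf{R}^{-1}\Gamma_{[:,i]} \\ \Gamma_{[:,i]}^{\mathrm{T}}\mathbf{R}^{-1}O_{[:,i]} & \mathbf{Q}_i^{-1}+\Gamma_{[:,i]}^{\mathrm{T}}\mathbf{R}^{-1}\Gamma_{[:,i]} \end{bmatrix}\begin{bmatrix} \hat{x}^{i,[p]}_{k-N|k} \\ \{\hat{w}^{i,[p]}\}_{k-N}^{k-1} \end{bmatrix}=\begin{bmatrix} c_1 \\ c_2 \end{bmatrix},
\]
a symmetric $2\times2$ block system whose solution is exactly what \eqref{solution} claims.

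To isolate $\hat{x}^{i,[p]}_{k-N|k}$, I would solve the second block row for $\{\hat{w}^{i,[p]}\}_{k-N}^{k-1}$ and substitute into the first, producing as coefficient the Schur complement $\big(P_i^{-1}+O_{[:,i]}^{\mathrm{T}}\mathbf{R}^{-1}O_{[:,i]}\big)-O_{[:,i]}^{\mathrm{T}}\mathbf{R}^{-1}\Gamma_{[:,i]}\big(\mathbf{Q}_i^{-1}+\Gamma_{[:,i]}^{\mathrm{T}}\mathbf{R}^{-1}\Gamma_{[:,i]}\big)^{-1}\Gamma_{[:,i]}^{\mathrm{T}}\mathbf{R}^{-1}O_{[:,i]}$ and, on the right, $c_1-O_{[:,i]}^{\mathrm{T}}\mathbf{R}^{-1}\Gamma_{[:,i]}\big(\mathbf{Q}_i^{-1}+\Gamma_{[:,i]}^{\mathrm{T}}\mathbf{R}^{-1}\Gamma_{[:,i]}\big)^{-1}c_2$. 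The key step, which delivers the clean form of the theorem, is to recognize the inner factor as a Woodbury reduction: since $\mathbf{R}^{-1}-\mathbf{R}^{-1}\Gamma_{[:,i]}\big(\mathbf{Q}_i^{-1}+\Gamma_{[:,i]}^{\mathrm{T}}\mathbf{R}^{-1}\Gamma_{[:,i]}\big)^{-1}\Gamma_{[:,i]}^{\mathrm{T}}\mathbf{R}^{-1}=\big(\mathbf{R}+\Gamma_{[:,i]}\mathbf{Q}_i\Gamma_{[:,i]}^{\mathrm{T}}\big)^{-1}$, the Schur complement collapses to $P_i^{-1}+O_{[:,i]}^{\mathrm{T}}\big(\mathbf{R}+\Gamma_{[:,i]}\mathbf{Q}_i\Gamma_{[:,i]}^{\mathrm{T}}\big)^{-1}O_{[:,i]}$, exactly the matrix inverted in the first line of \eqref{solution}; inverting it gives the stated formula for $\hat{x}^{i,[p]}_{k-N|k}$.

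For the disturbance estimate I would carry out the elimination in the opposite order, solving the first block row for $\hat{x}^{i,[p]}_{k-N|k}$ and substituting into the second. By symmetry this yields the complementary Schur complement $\big(\mathbf{Q}_i^{-1}+\Gamma_{[:,i]}^{\mathrm{T}}\mathbf{R}^{-1}\Gamma_{[:,i]}\big)-\Gamma_{[:,i]}^{\mathrm{T}}\mathbf{R}^{-1}O_{[:,i]}\big(P_i^{-1}+O_{[:,i]}^{\mathrm{T}}\mathbf{R}^{-1}O_{[:,i]}\big)^{-1}O_{[:,i]}^{\mathrm{T}}\mathbf{R}^{-1}\Gamma_{[:,i]}$, and a second Woodbury step — this time using $\mathbf{R}^{-1}-\mathbf{R}^{-1}O_{[:,i]}\big(P_i^{-1}+O_{[:,i]}^{\mathrm{T}}\mathbf{R}^{-1}O_{[:,i]}\big)^{-1}O_{[:,i]}^{\mathrm{T}}\mathbf{R}^{-1}=\big(\mathbf{R}+O_{[:,i]}P_iO_{[:,i]}^{\mathrm{T}}\big)^{-1}$ — reduces it to $\mathbf{Q}_i^{-1}+\Gamma_{[:,i]}^{\mathrm{T}}\big(\mathbf{R}+O_{[:,i]}P_iO_{[:,i]}^{\mathrm{T}}\big)^{-1}\Gamma_{[:,i]}$, matching the second line of \eqref{solution} together with its right-hand side $-\Gamma_{[:,i]}^{\mathrm{T}}\mathbf{R}^{-1}O_{[:,i]}\big(P_i^{-1}+O_{[:,i]}^{\mathrm{T}}\mathbf{R}^{-1}O_{[:,i]}\big)^{-1}c_1+c_2$.

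Finally I would justify that every inverse above exists, which simultaneously validates the substitutions and establishes uniqueness. The block matrix displayed above is precisely the Hessian of the strictly convex quadratic $\bar{\Phi}_k^{i,[p]}$ in $(\hat{x}^{i,[p]}_{k-N|k},\{\hat{w}^{i,[p]}\}_{k-N}^{k-1})$, equal to $\mathrm{diag}(P_i^{-1},\mathbf{Q}_i^{-1})$ plus a positive semidefinite term; since $P_i$, $\mathbf{Q}_i$, $\mathbf{R}$ are positive definite, the block matrix and both Schur complements are positive definite, hence invertible. I anticipate that the main obstacle is not conceptual but one of careful matching: correctly identifying that the factor $\mathbf{R}^{-1}-\mathbf{R}^{-1}(\cdot)(\cdot)^{-1}(\cdot)\mathbf{R}^{-1}$ arising in each Schur complement is exactly the Woodbury inverse of $\mathbf{R}+\Gamma_{[:,i]}\mathbf{Q}_i\Gamma_{[:,i]}^{\mathrm{T}}$ (respectively $\mathbf{R}+O_{[:,i]}P_iO_{[:,i]}^{\mathrm{T}}$). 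Once that identification is made, the stated closed forms follow immediately.
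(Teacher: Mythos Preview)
Your proposal is correct and follows essentially the same route as the paper: the paper likewise assembles \eqref{derivative} into the $2\times2$ block system with coefficient matrix $\begin{bmatrix} m_{11} & m_{12}\\ m_{21} & m_{22}\end{bmatrix}$, applies the block-inverse formula (Lemma~\ref{lemma:2}) to obtain the Schur-complement expressions \eqref{eq:17}, and then invokes the Matrix Inversion Lemma (Lemma~\ref{inversion}) exactly as you do to collapse each Schur complement to the clean forms in \eqref{solution}. Your explicit positive-definiteness argument for invertibility is a small addition that the paper leaves implicit.
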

Lemmas \ref{inversion} and \ref{lemma:2} below will be used for proving Theorem~\ref{theorem:0}.
\begin{lemma}(Matrix Inversion Lemma \cite{henderson1981deriving})\label{inversion}
For nonsingular matrices $A$ and $D$, it holds that
\begin{equation}
(A-BD^{-1}C)^{-1} = A^{-1}+A^{-1}B(D-CA^{-1}B)^{-1}CA^{-1}
\end{equation}
\end{lemma}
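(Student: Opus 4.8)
The plan is to establish the identity by direct verification: rather than deriving the right-hand side, I would take the claimed inverse as given and show that multiplying it by $A-BD^{-1}C$ returns the identity matrix.

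First I would introduce the shorthand $S := D - CA^{-1}B$ so that the asserted inverse becomes $X := A^{-1} + A^{-1}BS^{-1}CA^{-1}$, and the goal reduces to proving $(A-BD^{-1}C)X = I$. Expanding this product along the two summands of $X$ yields
\[
(A-BD^{-1}C)X = (A-BD^{-1}C)A^{-1} + (A-BD^{-1}C)A^{-1}BS^{-1}CA^{-1}.
\]
The first term collapses at once to $I - BD^{-1}CA^{-1}$, so the entire task rests on showing that the second term equals $+BD^{-1}CA^{-1}$, whereupon the two cross terms cancel and only $I$ survives.

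The crux is therefore the second term. Cancelling the interior $A^{-1}A$, it reads $(B - BD^{-1}CA^{-1}B)S^{-1}CA^{-1}$, and the decisive algebraic step is the factorization
\[
B - BD^{-1}CA^{-1}B = BD^{-1}(D - CA^{-1}B) = BD^{-1}S,
\]
which lets the factor $S^{-1}$ cancel and leaves exactly $BD^{-1}CA^{-1}$, as required. Verifying $X(A-BD^{-1}C)=I$ instead would proceed by the analogous factorization of $C - CA^{-1}BD^{-1}C = SD^{-1}C$; either direction suffices since all matrices here are square, so a one-sided inverse is a two-sided inverse.

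The main obstacle is not the computation itself but the implicit well-posedness of the statement: the formula presupposes that $S = D - CA^{-1}B$ is nonsingular in addition to $A$ and $D$. I would point out that, with $A$ and $D$ invertible, nonsingularity of $A - BD^{-1}C$ and of $D - CA^{-1}B$ are equivalent, since both are equivalent to nonsingularity of the $2\times 2$ block matrix built from $A$, $B$, $C$, $D$. Hence the inverse appearing on the right-hand side exists precisely when the inverse on the left-hand side does, and the verification above is valid under the stated hypotheses.
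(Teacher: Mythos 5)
Your proposal is correct. Note, however, that the paper itself supplies no proof of this lemma at all: it is stated as a known result and dispatched with a citation to Henderson and Searle, so there is no in-paper argument to compare against. Your direct verification --- multiplying the claimed inverse $X = A^{-1} + A^{-1}BS^{-1}CA^{-1}$ (with $S = D - CA^{-1}B$) against $A - BD^{-1}C$ and using the factorization $B - BD^{-1}CA^{-1}B = BD^{-1}S$ to make the cross terms cancel --- is the standard and complete way to establish the identity, and your observation that a one-sided inverse suffices for square matrices closes the argument cleanly. A genuine strength of your write-up is that you address the well-posedness issue the paper's statement glosses over: the right-hand side presupposes invertibility of $D - CA^{-1}B$, which does not follow from nonsingularity of $A$ and $D$ alone (take $A = D = I$, $B = C = I$), and your equivalence via the $2\times 2$ block matrix correctly identifies the precise condition under which both sides of the identity exist simultaneously. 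This makes your version of the statement slightly more careful than the one in the paper.
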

\begin{lemma}\cite{zhang2006schur}\label{lemma:2}
Let $M$ be a block matrix:
\begin{equation*}
M=\left[
\begin{array}{cc}
    A & B  \\
    C & D
\end{array}\right]
\end{equation*}
where $A$, $B$, $C$, and $D$ are matrices of compatible dimensions. If $A$ and $D$ are invertible, then:
\begin{equation*}
M^{-1}=
\left[\begin{array}{cc}
  (A-BD^{-1}C)^{-1}   & -(A-BD^{-1}C)^{-1}BD^{-1} \\
  -(D-CA^{-1}B)^{-1}CA^{-1}  & (D-CA^{-1}B)^{-1}
\end{array}\right]
\end{equation*}
\end{lemma}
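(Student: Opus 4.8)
The plan is to prove the formula by a block triangular (Schur) factorization of $M$ built on the invertibility of $A$, followed by inversion of the factors and a reconciliation of the resulting blocks with the stated mixed form using Lemma~\ref{inversion}. First I would perform block Gaussian elimination: since $A$ is invertible, subtracting $CA^{-1}$ times the first block row from the second yields $M = L\,\mathrm{diag}\{A, S_A\}\,U$, where $S_A = D - CA^{-1}B$ is the Schur complement of $A$, $L = \left[\begin{smallmatrix} I & 0 \\ CA^{-1} & I\end{smallmatrix}\right]$, and $U = \left[\begin{smallmatrix} I & A^{-1}B \\ 0 & I\end{smallmatrix}\right]$. Both $L$ and $U$ are unit block-triangular, hence invertible with inverses obtained by negating their off-diagonal blocks, and $\mathrm{diag}\{A,S_A\}$ is invertible because $A$ and $S_A$ are (the invertibility of $S_A$ is implicit in the statement, since $S_A^{-1}$ appears in the claimed formula).

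Next I would compute $M^{-1} = U^{-1}\,\mathrm{diag}\{A^{-1}, S_A^{-1}\}\,L^{-1}$ by multiplying the three factors. This is a routine product that yields the bottom block row exactly as stated, namely $-S_A^{-1}CA^{-1}$ in the $(2,1)$ position and $S_A^{-1} = (D-CA^{-1}B)^{-1}$ in the $(2,2)$ position, while the top block row emerges as $A^{-1} + A^{-1}BS_A^{-1}CA^{-1}$ in the $(1,1)$ position and $-A^{-1}BS_A^{-1}$ in the $(1,2)$ position.

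The main obstacle, and the only non-mechanical step, is reconciling this top block row $-$ which arises naturally in terms of $A^{-1}$ and the Schur complement $S_A$ $-$ with the stated expressions, which are written in terms of $D^{-1}$ and the other Schur complement $S_D = A - BD^{-1}C$. For the $(1,1)$ block I would invoke Lemma~\ref{inversion} directly with its $A,B,C,D$ identified with the present blocks: this gives $(A - BD^{-1}C)^{-1} = A^{-1} + A^{-1}B(D - CA^{-1}B)^{-1}CA^{-1} = A^{-1} + A^{-1}BS_A^{-1}CA^{-1}$, precisely the computed $(1,1)$ entry. For the $(1,2)$ block I would establish the push-through identity $A^{-1}BS_A^{-1} = S_D^{-1}BD^{-1}$, which follows from the elementary manipulation $(A - BD^{-1}C)A^{-1}B = B - BD^{-1}CA^{-1}B = BD^{-1}(D - CA^{-1}B) = BD^{-1}S_A$ upon left-multiplying by $S_D^{-1}$ and right-multiplying by $S_A^{-1}$. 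Substituting these two identities converts the computed top row into $(A-BD^{-1}C)^{-1}$ and $-(A-BD^{-1}C)^{-1}BD^{-1}$, matching the claim and completing the proof. An alternative, fully symmetric route would factorize $M$ instead through the Schur complement of $D$, obtaining the top row directly and the bottom row via Lemma~\ref{inversion}; either way Lemma~\ref{inversion} supplies the bridge between the two Schur complements.
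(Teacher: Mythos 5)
Your proof is correct, but there is no in-paper argument to compare it against: Lemma \ref{lemma:2} is quoted directly from the cited reference \cite{zhang2006schur}, and the proof environment that follows it in the source is the proof of Theorem \ref{theorem:0}, which merely \emph{applies} the lemma. Your block-LDU route is a sound, self-contained substitute. The factorization $M = L\,\mathrm{diag}\{A, S_A\}\,U$ with $S_A = D - CA^{-1}B$ checks out, the unit-triangular factors invert by negating their off-diagonal blocks, and the product $U^{-1}\,\mathrm{diag}\{A^{-1},S_A^{-1}\}\,L^{-1}$ indeed has bottom row $\bigl[-S_A^{-1}CA^{-1},\ S_A^{-1}\bigr]$ and top row $\bigl[A^{-1}+A^{-1}BS_A^{-1}CA^{-1},\ -A^{-1}BS_A^{-1}\bigr]$. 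Your two reconciliation steps are also valid: the $(1,1)$ entry is precisely Lemma \ref{inversion}, and the push-through computation $S_D A^{-1}B = B - BD^{-1}CA^{-1}B = BD^{-1}S_A$ (with $S_D = A - BD^{-1}C$) yields $A^{-1}BS_A^{-1} = S_D^{-1}BD^{-1}$, which converts the $(1,2)$ entry to the stated form. A further merit of your write-up is that it makes explicit a hypothesis the lemma statement glosses over: invertibility of $A$ and $D$ alone is not sufficient (take all four blocks equal to the scalar $1$; then $M$ is singular and $S_A = 0$), so invertibility of the Schur complements --- equivalently of $M$ --- must be read as an implicit assumption, exactly as you do. In the paper's actual use this is harmless, since the block matrix inverted in \eqref{eq:15} is symmetric positive definite (a positive definite block-diagonal matrix plus a positive semidefinite Gram term), so both Schur complements are automatically positive definite. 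What your approach buys is self-containedness at the cost of a few lines, using nothing beyond Lemma \ref{inversion}, which the paper already states; what the paper's citation buys is brevity.
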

\begin{proof}
Let
\begin{align}\label{m_definition}
  &m_{11} = P_{i}^{-1}+O_{[:,i]}^{\mathrm{T}}\mathbf{R}^{-1}O_{[:,i]},\quad \quad  m_{12}=O^{\mathrm{T}}_{[:,i]}\mathbf{R}^{-1}\Gamma_{[:,i]}, \\
  &m_{21}=\Gamma^{\mathrm{T}}_{[:,i]}\mathbf{R}^{-1}O_{[:,i]}, \quad \quad \quad \quad \quad  m_{22}=\mathbf{Q}_{i}^{-1}+\Gamma^{\mathrm{T}}_{[:,i]}\mathbf{R}^{-1}\Gamma_{[:,i]}.\nonumber
\end{align}
Based on \eqref{solution}, Lemma \ref{inversion} and Lemma \ref{lemma:2}, the estimates of the state and the disturbances of the $i$th subsystem in the $p$th iteration step can be calculated as:
\begin{equation}\label{eq:15}
  \left[\begin{array}{c}
          \hat{x}^{i,[p]}_{k-N|k} \\
          \{\hat{w}^{i,[p]}\}^{k-1}_{k-N}
        \end{array}\right]
        =
        \left[\begin{array}{cc}
          m_{11} & m_{12} \\
          m_{21} & m_{22}
        \end{array}\right]^{-1}
         \left[\begin{array}{c}
          c_{1} \\
          c_{2}
        \end{array}\right]
\end{equation}
Based on Lemma \ref{lemma:2}, we can further derive that
\begin{align}\label{eq:16}
  &\left[\begin{array}{c}
          \hat{x}^{i,[p]}_{k-N|k} \\
          \{\hat{w}^{i,[p]}\}^{k-1}_{k-N}
        \end{array}\right]\\
        =&\left[\begin{array}{cc}
          (m_{11}-m_{12}m_{22}^{-1}m_{21})^{-1} & -(m_{11}-m_{12}m_{22}^{-1}m_{21})^{-1}m_{12}m_{22}^{-1} \\
         -(m_{22}-m_{21}m_{11}^{-1}m_{12}\big)^{-1}m_{21}m_{11}^{-1} & (m_{22}-m_{21}m_{11}^{-1}m_{12}\big)^{-1}
        \end{array}\right]
        \left[\begin{array}{c}
          c_{1} \\
          c_{2}
        \end{array}\right]\nonumber
\end{align}
Further, the estimates of the state and the estimates of the disturbances can be presented separately:
\begin{subequations}\label{eq:17}
\begin{align}
  \hat{x}^{i,[p]}_{k-N|k} & = \big(m_{11}-m_{12}m_{22}^{-1}m_{21}\big)^{-1}\big(c_{1} -m_{12}m_{22}^{-1}c_{2}\big) \label{eq:17a}\\
  \{\hat{w}^{i,[p]}\}^{k-1}_{k-N} & = \big(m_{22}-m_{21}m_{11}^{-1}m_{12}\big)^{-1}\big(-m_{21}m_{11}^{-1}c_{1}+c_{2}\big)\label{eq:17b}
\end{align}
\end{subequations}
By applying Lemma \ref{inversion}, the following transformation is conducted for \eqref{eq:17a}:
\begin{align}\label{form_17a}
  &\quad\big(m_{11}-m_{12}m_{22}^{-1}m_{21}\big)^{-1} \nonumber \\
   &=\Big(P_{i}^{-1}+O_{[:,i]}^{\mathrm{T}}\mathbf{R}^{-1}O_{[:,i]}-O^{\mathrm{T}}_{[:,i]}\mathbf{R}^{-1}\Gamma_{[:,i]}\big(\mathbf{Q}_{i}^{-1}+\Gamma^{\mathrm{T}}_{[:,i]}\mathbf{R}^{-1}\Gamma_{[:,i]}\big)^{-1}\Gamma^{\mathrm{T}}_{[:,i]}\mathbf{R}^{-1}O_{[:,i]}\Big)^{-1}  \\
  & = \Big(P_{i}^{-1}+O_{[:,i]}^{\mathrm{T}}\Big(\mathbf{R}^{-1}-\mathbf{R}^{-1}\Gamma_{[:,i]}\big(\mathbf{Q}_{i}^{-1}+\Gamma^{\mathrm{T}}_{[:,i]}\mathbf{R}^{-1}\Gamma_{[:,i]}\big)^{-1}\Gamma^{\mathrm{T}}_{[:,i]}\mathbf{R}^{-1}\Big)O_{[:,i]}\Big)^{-1}\nonumber  \\
  &=\Big(P_{i}^{-1}+O_{[:,i]}^{\mathrm{T}}\big(\mathbf{R}+\Gamma_{[:,i]}\mathbf{Q}_{i}\Gamma^{\mathrm{T}}_{[:,i]}\big)^{-1}O_{[:,i]}\Big)^{-1} \nonumber
\end{align}
Similarly, for \eqref{eq:17b}, it holds that:
\begin{align}\label{form_17b}
  & \quad (m_{22}-m_{21}m_{11}^{-1}m_{12})^{-1} \nonumber \\
   &=\Big(\mathbf{Q}_{i}^{-1}+\Gamma^{\mathrm{T}}_{[:,i]}\mathbf{R}^{-1}\Gamma_{[:,i]}-\Gamma^{\mathrm{T}}_{[:,i]}\mathbf{R}^{-1}O_{[:,i]}\big( P_{i}^{-1}+O_{[:,i]}^{\mathrm{T}}\mathbf{R}^{-1}O_{[:,i]}\big)^{-1}O^{\mathrm{T}}_{[:,i]}\mathbf{R}^{-1}\Gamma_{[:,i]}\Big)^{-1} \\
   & =\Big(\mathbf{Q}_{i}^{-1}+\Gamma^{\mathrm{T}}_{[:,i]}\Big(\mathbf{R}^{-1}-\mathbf{R}^{-1}O_{[:,i]}\big( P_{i}^{-1}+O_{[:,i]}^{\mathrm{T}}\mathbf{R}^{-1}O_{[:,i]}\big)^{-1}O^{\mathrm{T}}_{[:,i]}\mathbf{R}^{-1}\Big)\Gamma_{[:,i]}\Big)^{-1}\nonumber\\
   &=\Big(\mathbf{Q}_{i}^{-1}+\Gamma^{\mathrm{T}}_{[:,i]}\big(\mathbf{R}+O_{[:,i]} P_{i}O_{[:,i]}^{\mathrm{T}}\big)^{-1}\Gamma_{[:,i]}\Big)^{-1}\nonumber
\end{align}
By substituting \eqref{m_definition}, \eqref{form_17a}, and \eqref{form_17b} into \eqref{eq:17}, \eqref{solution} is proven. $\square$
\end{proof}
\begin{rmk}
Iterative distributed state estimation methods have the potential to provide estimates that converge to the centralized counterpart. When faster convergence of the estimation error and/or more accurate estimates are needed, iterative distributed designs can be more favorable as compared to non-iterative algorithms (for which the local estimators are only executed once at each sampling instant) \cite{schneider2015convergence, schneider2017solution}.
Based on this consideration, as is similar to DMHE-2, DMHE-1 is also designed in a way such that the local estimators within the distributed framework are executed iteratively for multiple times within each sampling period.
As the unconstrained local estimators of DMHE-1 are executed for multiple times at each sampling instant, the estimates of the subsystems given by the local estimators are updated iteratively at each sampling instant, which are characterized by the analytical recursive expression in \eqref{solution}.
\end{rmk}

\subsection{Implementation of DMHE algorithm}
We present Algorithm \ref{alg1} as the implementation algorithm that can be followed to execute the proposed DMHE designs to generate optimal state estimate $\hat{x}_{k-N|k}^{i}$ for each subsystem $i$, $i\in\mathbb N$, of the entire system in \eqref{model}. This algorithm describes the iterative executions of the local estimators for both DMHE-1 and DMHE-2.
\begin{algorithm}[t]
\caption{Execution of the proposed distributed estimation method (DMHE-1 or DMHE-2)}\vspace{3mm}
\label{alg1}
At time $k\geq N+1$, MHE $i$, $i\in\mathbb{N}$, conducts the following steps:
\begin{enumerate}
    \item[1.] Receive measured outputs $\{y\}_{k-N}^{k}$, inputs $\{u\}_{k-N}^{k-1}$, and optimal estimate $\hat{x}^{j}_{k-N-1|k-1}$ of $j$th subsystem, $j\in\mathbb{N}\setminus\{i\}$, at the last time step $k-1$ from the estimators of interacting subsystems.
    \item[2.] Compute the open-loop prediction $\bar{x}_{k-N}^{i}$ via \eqref{pmax}.
    \item[3.] Initialize itself with $\hat{x}_{k-N|k}^{i,[0]} = \bar{x}_{k-N}^{i}$.
    \item[4.]Set $p=1$.
    \begin{enumerate}
        \item [4.1.] Receive the most recent estimates $\hat{x}_{k-N|k}^{j,[p-1]}$ and $\{\hat{w}^{j,[p-1]}\}_{k-N}^{k-1}$ from estimator $j$, $j\in\mathbb{N}\setminus\{i\}$.
        \item[4.2.] Calculate $\hat{x}_{k-N|k}^{i,[p]}$ and $\{\hat{w}^{i,[p]}\}_{k-N}^{k-1}$ by solving
    \eqref{solution} for DMHE-1, or solving \eqref{DMHE2} if DMHE-2 is adopted.
        \item[4.3.]
        {\emph{\textbf{If}}} $p=p_{\mathrm{max}}$, do:
    \begin{itemize}
        \item The estimate $\hat{x}_{k-N|k}^{i,[p_{\mathrm{max}}]}$  is treated as an optimal estimate $\hat{x}_{k-N|k}^{i}$ for subsystem $i$, $i\in\mathbb{N}$. Go to step 5.
    \end{itemize}
     {\emph{\textbf{Else}}}, do:
     \begin{itemize}
         \item Set $p=p+1$. Go to step 4.1.
     \end{itemize}
    \end{enumerate}
     \item[5.]\label{final} Set $k=k+1$. Go to step 1.
     \end{enumerate}

\end{algorithm}

\section{Stability analysis}
This section presents sufficient conditions for the convergence of the state estimate and stability of the estimation error dynamics of the proposed partition-based distributed estimation method for both the unconstrained case (DMHE-1) and the constrained case (DMHE-2).

\subsection{Convergence of the unconstrained iterative DMHE (DMHE-1)}
The MHE-based estimators of the distributed estimation mechanism (developed based on either DMHE-1 or DMHE-2) are required to be executed iteratively at each new sampling instant as new sensor measurements are made available. It is necessary to ensure the estimates of the MHE-based estimators will converge as the estimators are executed iteratively within each sampling period. In this subsection, we present sufficient conditions on the convergence of the subsystem state estimate as DMHE-1 is executed iteratively.

\begin{theorem}\label{theorem1}
In the $p$th iteration step at sampling instant $k$, the solution to DMHE-1 for the entire system \eqref{model} can be described by the following form:
\begin{equation}\label{DMHE_no}
\left[\begin{array}{c}
                 \hat{x}_{k-N|k}^{[p]}  \\
                  \{\hat{w}^{[p]}\}_{k-N}^{k-1}
               \end{array}\right]= - M_{d}^{-1}M_{r}\left[\begin{array}{c}
                 \hat{x}_{k-N|k}^{[p-1]}  \\
                  \{\hat{w}^{[p-1]}\}_{k-N}^{k-1}
               \end{array}\right]+M_{k}.
\end{equation}
In addition, the estimates $\hat{x}_{k-N|k}^{[p]}$ and $\{\hat{w}^{[p]}\}_{k-N}^{k-1}$ converge asymptotically to the unique solution to CMHE in \eqref{CMHE_opt}, if all eigenvalues of the matrix $M_{d}^{-1}M_{r}$ are inside the unit circle, i.e., the spectral radius
\begin{equation}\label{convergence_no}
\rho(M_{d}^{-1}M_{r})<1
\end{equation}
where
\begin{align*}
  M_r&=\left[\begin{matrix}
                  \big(O^{\mathrm{T}}\mathbf{R}^{-1}O\big)_{r} & \big(O^{\mathrm{T}}\mathbf{R}^{-1}\Gamma\big)_{r}  \\
                   \big(\Gamma^{\mathrm{T}}\mathbf{R}^{-1}O\big)_{r}& \big(\Gamma^{\mathrm{T}}\mathbf{R}^{-1}\Gamma\big)_{r}
                \end{matrix}\right],
  M_d  =\left[\begin{matrix}
                  \big(P^{-1}+O^{\mathrm{T}}\mathbf{R}^{-1}O\big)_{d} & \big(O^{\mathrm{T}}\mathbf{R}^{-1}\Gamma\big)_{d}  \\
                   \big(\Gamma^{\mathrm{T}}\mathbf{R}^{-1}O\big)_{d}& \big(\mathbf{Q}^{-1}+\Gamma^{\mathrm{T}}\mathbf{R}^{-1}\Gamma\big)_{d}
                \end{matrix}\right],\\
  M_{k} & =\left[\begin{array}{c}
         P^{-1}\bar{x}_{k-N}+O^{\mathrm{T}}\mathbf{R}^{-1}\Big(\{y\}^{k}_{k-N}-\Lambda\{u\}_{k-N}^{k-1}\Big) \\
         \Gamma^{\mathrm{T}}\mathbf{R}^{-1}\Big(\{y\}^{k}_{k-N}-\Lambda\{u\}_{k-N}^{k-1}\Big)
       \end{array}\right],\\
  \mathbf{Q}&= \mathrm{diag}\{\mathbf{Q}_1, \mathbf{Q}_2, \ldots, \mathbf{Q}_n\},~
   P= \mathrm{diag}\{P_1, P_2, \ldots, P_n\}.
\end{align*}
\end{theorem}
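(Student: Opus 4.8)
The plan is to derive the representation \eqref{DMHE_no} directly from the stacked first-order optimality conditions and then to read off convergence from the theory of stationary linear iterations.

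First I would collect, for every subsystem $i\in\mathbb{N}$, the two optimality equations \eqref{derivative_1}--\eqref{derivative_2}. The crucial structural observation is that the only current-iteration unknowns appearing in the $i$th pair are subsystem $i$'s own variables $\hat{x}^{i,[p]}_{k-N|k}$ and $\{\hat{w}^{i,[p]}\}_{k-N}^{k-1}$, while every cross term carries the previous-iteration superscript $[p-1]$. Writing $z^{i}=\mathrm{col}\big(\hat{x}^{i}_{k-N|k},\{\hat{w}^{i}\}_{k-N}^{k-1}\big)$ and moving all $[p]$-terms to the left and all $[p-1]$-terms to the right, the $i$th block takes the form $M_d^{(i)}z^{i,[p]} = -\sum_{l\neq i}(\cdots)z^{l,[p-1]} + g^{i}$, where $g^{i}$ collects the terms in $\bar{x}^{i}_{k-N}$, $\{y\}$ and $\{u\}$. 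Stacking over $i\in\mathbb{N}$ then assembles the left-hand coefficient into a block-diagonal matrix and the coupling into a purely off-diagonal one.

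I would then verify that these assembled matrices are exactly $M_d$ and $M_r$. Viewing $O^{\mathrm{T}}\mathbf{R}^{-1}O$, $O^{\mathrm{T}}\mathbf{R}^{-1}\Gamma$, $\Gamma^{\mathrm{T}}\mathbf{R}^{-1}O$ and $\Gamma^{\mathrm{T}}\mathbf{R}^{-1}\Gamma$ as block matrices indexed by the subsystem column partitions $O_{[:,i]}$ and $\Gamma_{[:,i]}$, the operator $(\cdot)_d$ retains the $i=l$ blocks and $(\cdot)_r$ the $i\neq l$ blocks. Since $P^{-1}$ and $\mathbf{Q}^{-1}$ enter only a subsystem's own equation, they live solely in the block-diagonal part, which explains why $M_r$ carries no $P^{-1}$ or $\mathbf{Q}^{-1}$. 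Premultiplying the stacked system $M_d z^{[p]} = -M_r z^{[p-1]} + g$ by $M_d^{-1}$ (invertible because each diagonal block is the positive-definite Hessian of a subsystem subproblem) then produces \eqref{DMHE_no}, with the data vector $g=\mathrm{col}\big(P^{-1}\bar{x}_{k-N}+O^{\mathrm{T}}\mathbf{R}^{-1}(\{y\}-\Lambda\{u\}),\,\Gamma^{\mathrm{T}}\mathbf{R}^{-1}(\{y\}-\Lambda\{u\})\big)$ collected into $M_k$.

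For the convergence claim I would recognize \eqref{DMHE_no} as a stationary linear iteration $z^{[p]}=Tz^{[p-1]}+M_k$ with iteration matrix $T=-M_d^{-1}M_r$. By the standard spectral-radius criterion for such iterations, $z^{[p]}$ converges from any initialization to the unique fixed point if and only if $\rho(T)=\rho(M_d^{-1}M_r)<1$. It remains to identify the fixed point with the CMHE optimum: the limit $z^{\ast}$ satisfies $(M_d+M_r)z^{\ast}=g$, and $M_d+M_r=\mathrm{diag}(P^{-1},\mathbf{Q}^{-1})+[O,\Gamma]^{\mathrm{T}}\mathbf{R}^{-1}[O,\Gamma]$ is exactly the Hessian of the condensed centralized objective \eqref{CMHE_obj} obtained after substituting $\{\hat{v}\}=\{y\}-O\hat{x}_{k-N|k}-\Gamma\{\hat{w}\}-\Lambda\{u\}$, with $g$ its constant gradient term. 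Because $\mathrm{diag}(P^{-1},\mathbf{Q}^{-1})\succ 0$ and $[O,\Gamma]^{\mathrm{T}}\mathbf{R}^{-1}[O,\Gamma]\succeq 0$, the centralized problem is strictly convex with a unique minimizer, so $z^{\ast}$ coincides with the CMHE solution. The main obstacle I anticipate is the bookkeeping of the first two steps: confirming that the column-selection partition induced by $O_{[:,i]}$ and $\Gamma_{[:,i]}$ makes the stacked left-hand side coincide \emph{exactly} with the block-diagonal $M_d$ and the coupling with the off-diagonal $M_r$, and that the summations $\sum_{l\neq i}$ reproduce precisely the off-diagonal blocks in the correct order. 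Once the recursion is cast as $z^{[p]}=Tz^{[p-1]}+M_k$, the convergence half is routine, and the only remaining care is the identification $M_d+M_r=$ centralized Hessian, which rests on the condensed representation of CMHE and the positive-definiteness of $P$, $Q$ and $R$.
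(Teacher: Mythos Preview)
Your proposal is correct and follows essentially the same route as the paper: both stack the subsystem first-order conditions \eqref{derivative_1}--\eqref{derivative_2}, identify the block-diagonal part with $M_d$ and the off-diagonal coupling with $M_r$, and then invoke the spectral-radius criterion for the resulting stationary linear iteration. Your treatment is in fact slightly more complete than the paper's, since you explicitly justify the invertibility of $M_d$ via positive definiteness of the subsystem Hessians and explicitly identify the fixed point with the CMHE minimizer through $M_d+M_r=\mathrm{diag}(P^{-1},\mathbf{Q}^{-1})+[O,\Gamma]^{\mathrm{T}}\mathbf{R}^{-1}[O,\Gamma]$, whereas the paper simply asserts convergence to CMHE once $\rho(M_d^{-1}M_r)<1$.
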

\begin{proof}

The following equality, which will be used to derive \eqref{DMHE_no}, holds:
\begin{align}\label{transformation}
    & \left[ \begin{array}{c}
                                                O_{[:,1]}^{{\rm{\mathrm{T}}}}\mathbf{R}^{-1}O_{[:,1]}\hat{x}^{1}_{k-N|k}+\sum_{l\in\mathbb{N}\setminus\{1\}}O_{[:,1]}^{{\rm{\mathrm{T}}}}\mathbf{R}^{-1}O_{[:,l]}\hat{x}^{l}_{k-N|k} \\
                                                O_{[:,2]}^{{\rm{\mathrm{T}}}}\mathbf{R}^{-1}O_{[:,2]}\hat{x}^{2}_{k-N|k}+\sum_{l\in\mathbb{N}\setminus\{2\}}O_{[:,2]}^{{\rm{\mathrm{T}}}}\mathbf{R}^{-1}O_{[:,l]}\hat{x}^{l}_{k-N|k} \\
                                                \vdots\\
                                                O_{[:,n]}^{{\rm{\mathrm{T}}}}\mathbf{R}^{-1}O_{[:,n]}\hat{x}^{n}_{k-N|k}+\sum_{l\in\mathbb{N}\setminus\{n\}}O_{[:,n]}^{{\rm{\mathrm{T}}}}\mathbf{R}^{-1}O_{[:,l]}\hat{x}^{l}_{k-N|k}
                                              \end{array}\right]\\
    &= \left[ \begin{array}{c}
                                                O_{[:,1]}^{{\rm{\mathrm{T}}}}\mathbf{R}^{-1}\sum_{l=1}^{n}O_{[:,l]}\hat{x}^{l}_{k-N|k} \\
                                                O_{[:,2]}^{{\rm{\mathrm{T}}}}\mathbf{R}^{-1}\sum_{l=1}^{n}O_{[:,l]}\hat{x}^{l}_{k-N|k} \\
                                                \vdots\\
                                                O_{[:,n]}^{{\rm{\mathrm{T}}}}\mathbf{R}^{-1}\sum_{l=1}^{n}O_{[:,l]}\hat{x}^{l}_{k-N|k}
                                              \end{array}\right]\nonumber
    =\left[ \begin{array}{c}
                                                O_{[:,1]}^{{\rm{\mathrm{T}}}}\mathbf{R}^{-1}O\hat{x}_{k-N|k} \\
                                                O_{[:,2]}^{{\rm{\mathrm{T}}}}\mathbf{R}^{-1}O\hat{x}_{k-N|k} \\
                                                \vdots\\
                                                O_{[:,n]}^{{\rm{\mathrm{T}}}}\mathbf{R}^{-1}O\hat{x}_{k-N|k}
                                              \end{array}\right]\\
    &=O^{{\rm{\mathrm{T}}}}\mathbf{R}^{-1}O\hat{x}_{k-N|k}\nonumber
\end{align}
where $O=[O_{[:,1]},O_{[:,2]},\ldots,O_{[:,n]}]$ and $\hat{x}_{k-N|k}=\mathrm{col}(\hat{x}^{1}_{k-N|k},\hat{x}^{2}_{k-N|k}, \ldots,\hat{x}^{n}_{k-N|k})$.

Based on \eqref{transformation}, concatenating $O_{[:,i]}^{\mathrm{T}}\mathbf{R}^{-1}O_{[:,i]}\hat{x}^{i,[p]}_{k-N|k}$, $\sum_{l\in\mathbb{N}\setminus \{i\} }O_{[:,i]}^{\mathrm{T}}\mathbf{R}^{-1}O_{[:,l]}\hat{x}^{l, [p-1]}_{k-N|k}$ in \eqref{derivative_1} and \eqref{derivative_2} for all $i\in \mathbb{N}$,  can yield $(O^{\mathrm{T}}\mathbf{R}^{-1}O)_{d}\hat{x}^{[p]}_{k-N|k}$ and $(O^{\mathrm{T}}\mathbf{R}^{-1}O)_{r}\hat{x}^{[p-1]}_{k-N|k}$, respectively, where $(O^{\mathrm{T}}\mathbf{R}^{-1}O)_{d}=\mathrm{diag}\big\{O_{[:,1]}^{\mathrm{T}}\mathbf{R}^{-1}O_{[:,1]}, \ldots, O_{[:,n]}^{\mathrm{T}}\mathbf{R}^{-1}O_{[:,n]}\big\}$ and $(O^{\mathrm{T}}\mathbf{R}^{-1}O)_{d}+(O^{\mathrm{T}}\mathbf{R}^{-1}O)_{r}=O^{\mathrm{T}}\mathbf{R}^{-1}O$.
By concatenating the other terms in \eqref{derivative_1} and \eqref{derivative_2} for all $i\in \mathbb{N}$ in a similar manner, we can obtain:
\begin{subequations}\label{eq:25}
  \begin{align}
\big(P^{-1}+O^{\mathrm{T}}\mathbf{R}^{-1}O\big)_{d}\hat{x}_{k-N|k}^{[p]}&=-\big(O^{\mathrm{T}}\mathbf{R}^{-1}\Gamma\big)_{d}\{\hat{w}^{[p]}\}_{k-N}^{k-1} -  \big(O^{\mathrm{T}}\mathbf{R}^{-1}O\big)_{r}\hat{x}_{k-N|k}^{[p-1]}\nonumber \\
 &\quad -\big(O^{\mathrm{T}}\mathbf{R}^{-1}\Gamma\big)_{r} \{\hat{w}^{[p-1]}\}_{k-N}^{k-1} +P^{-1}\bar{x}_{k-N}+O^{\mathrm{T}}\mathbf{R}^{-1}\{y\}_{k-N}^{k}\label{sort_out_1}\\
      & \quad-O^{\mathrm{T}}\mathbf{R}^{-1}\Lambda\{u\}_{k-N}^{k-1} \nonumber\\
   \big(\mathbf{Q}^{-1}+\Gamma^{\mathrm{T}}\mathbf{R}^{-1}\Gamma\big)_d\{\hat{w}^{[p]}\}_{k-N}^{k-1}   &=  -(\Gamma^{\mathrm{T}}\mathbf{R}^{-1}O)_d\hat{x}_{k-N|k}^{[p]}- (\Gamma^{\mathrm{T}}\mathbf{R}^{-1}\Gamma)_r\{\hat{w}^{[p-1]}\}_{k-N}^{k-1}\label{sort_out_2}\\
&\quad-(\Gamma^{\mathrm{T}}\mathbf{R}^{-1}O)_r\hat{x}_{k-N|k}^{[p-1]}-\Gamma^{\mathrm{T}}\mathbf{R}^{-1}\Lambda\{u\}_{k-N}^{k-1}+\Gamma^{\mathrm{T}}\mathbf{R}^{-1}\{y\}_{k-N}^{k}\nonumber
\end{align}
\end{subequations}
By presenting \eqref{sort_out_1} and \eqref{sort_out_2} using a more compact form, the recursion in \eqref{DMHE_no} for DMHE-1 is proven.
In addition, when $\rho(M_{d}^{-1}M_{r})<1$, \eqref{DMHE_no} is asymptotically stable, which ensures the convergence of the estimates to the corresponding centralized MHE in (\ref{CMHE_opt}). $\square$
\end{proof}

\subsection{Stability of the unconstrained DMHE (DMHE-1)}
In this subsection, we prove the stability of the estimation error dynamics for DMHE-1 under the deterministic setting.
\begin{theorem}\label{theorem:2}
If Assumption \ref{assume2} holds, if \eqref{convergence_no} is satisfied, and if the MHE-based estimator of DMHE-1 are executed iteratively for infinite steps at each sampling instant $k$, then the estimation error $e_{k-N}=x_{k-N}-\hat{x}_{k-N|k}$ for the entire process (\ref{model}) given by DMHE-1 is described by:
\begin{equation*}
e_{k-N} = (P^{-1}+O^{\mathrm{T}}(\mathbf{R}+\Gamma \mathbf{Q}\Gamma^{\mathrm{T}})^{-1}O)^{-1}P^{-1}Ae_{k-N-1}.
\end{equation*}
In addition, if
\begin{equation}\label{stability_no}
\rho\left((P^{-1}+O^{\mathrm{T}}(\mathbf{R}+\Gamma \mathbf{Q}\Gamma^{\mathrm{T}})^{-1}O)^{-1}P^{-1}A\right)<1
\end{equation}
holds, then the estimation error provided by DMHE-1 asymptotically converges to zero as $k\rightarrow\infty$.
\end{theorem}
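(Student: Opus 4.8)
The plan is to reduce the analysis of DMHE-1 under infinitely many iterations to the error dynamics of the centralized MHE, and then to compute that error map explicitly. By Theorem~\ref{theorem1}, when the convergence condition \eqref{convergence_no} holds, the stacked iterates converge as $p\to\infty$ to the unique minimizer of the centralized problem \eqref{CMHE_opt}. Hence, running the local estimators for infinite steps makes the aggregated DMHE-1 estimate $\hat{x}_{k-N|k}$ coincide with the centralized MHE estimate, so it suffices to derive the error recursion for CMHE in the deterministic setting $w_k\equiv 0$, $v_k\equiv 0$, under which Assumption~\ref{assume2} guarantees the estimate is well defined.

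First I would obtain a closed-form expression for the centralized estimate. Writing the objective \eqref{CMHE_obj} compactly through $O$, $\Lambda$, $\Gamma$, the stationarity conditions with respect to $\hat{x}_{k-N|k}$ and $\{\hat{w}\}_{k-N}^{k-1}$ yield a $2\times 2$ block linear system that is exactly the centralized counterpart of \eqref{derivative} with the inter-subsystem coupling terms absent. Eliminating the disturbance block via the Schur complement (Lemma~\ref{lemma:2}) and simplifying with the matrix inversion lemma (Lemma~\ref{inversion}), following the same manipulation as in \eqref{form_17a}, I would reach
\begin{equation*}
\hat{x}_{k-N|k} = K\Big(P^{-1}\bar{x}_{k-N}+O^{\mathrm{T}}S\big(\{y\}_{k-N}^{k}-\Lambda\{u\}_{k-N}^{k-1}\big)\Big),
\end{equation*}
where $S=(\mathbf{R}+\Gamma\mathbf{Q}\Gamma^{\mathrm{T}})^{-1}$ and $K=(P^{-1}+O^{\mathrm{T}}SO)^{-1}$, so that $K^{-1}=P^{-1}+O^{\mathrm{T}}SO$.

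Next I would invoke the deterministic dynamics. With $w_k\equiv0$ and $v_k\equiv0$, the composite model \eqref{model} gives $\{y\}_{k-N}^{k}-\Lambda\{u\}_{k-N}^{k-1}=O x_{k-N}$, while the \emph{a priori} update \eqref{pmax}, read at the full-system level, gives $x_{k-N}-\bar{x}_{k-N}=A(x_{k-N-1}-\hat{x}_{k-N-1|k-1})=Ae_{k-N-1}$. Substituting the first identity into the closed form and writing $x_{k-N}=KK^{-1}x_{k-N}=K\big(P^{-1}+O^{\mathrm{T}}SO\big)x_{k-N}$, the two $KO^{\mathrm{T}}SO\,x_{k-N}$ contributions cancel in $e_{k-N}=x_{k-N}-\hat{x}_{k-N|k}$, leaving $e_{k-N}=KP^{-1}(x_{k-N}-\bar{x}_{k-N})=KP^{-1}Ae_{k-N-1}$, which is the claimed recursion. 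Asymptotic convergence $e_{k-N}\to 0$ then follows from the spectral-radius condition \eqref{stability_no} by the standard stability of a linear constant-coefficient recursion.

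The main obstacle I anticipate is the algebraic reduction in the second step: collapsing the coupled stationarity pair into the single gain $K$ with the weighting $S=(\mathbf{R}+\Gamma\mathbf{Q}\Gamma^{\mathrm{T}})^{-1}$ requires the Schur-complement and matrix-inversion-lemma identities to be applied in exactly the centralized form, and then the cancellation that produces the clean map $KP^{-1}A$ hinges on recognizing $K^{-1}=P^{-1}+O^{\mathrm{T}}SO$. Care is also needed to justify rigorously that infinite iterations of DMHE-1 reproduce the centralized optimum, but this is supplied directly by Theorem~\ref{theorem1} once \eqref{convergence_no} is assumed, so the remaining work is the deterministic substitution and the cancellation above.
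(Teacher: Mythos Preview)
Your proposal is correct and follows essentially the same approach as the paper: reduce to the centralized MHE via Theorem~\ref{theorem1}, solve the block stationarity system by Schur complement plus the matrix inversion lemma, substitute the deterministic relations $\{y\}-\Lambda\{u\}=Ox_{k-N}$ and $x_{k-N}-\bar{x}_{k-N}=Ae_{k-N-1}$, and cancel to obtain $e_{k-N}=KP^{-1}Ae_{k-N-1}$. Your only deviation is cosmetic: you collapse the measurement gain to $KO^{\mathrm{T}}S$ with $S=(\mathbf{R}+\Gamma\mathbf{Q}\Gamma^{\mathrm{T}})^{-1}$ before the cancellation, whereas the paper carries the separate blocks $H_{11},H_{12}$ and verifies the identity $H_{11}O^{\mathrm{T}}\mathbf{R}^{-1}O+H_{12}\Gamma^{\mathrm{T}}\mathbf{R}^{-1}O+H_{11}P^{-1}=I$ at the end---the algebra is equivalent.
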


\begin{proof}
If the MHE-based estimators of DMHE-1 are executed for infinite iteration steps at every sampling instant $k$, then \eqref{DMHE_no} can be rewritten as
\begin{equation}\label{DMHE_no_infinity}
         \left[\begin{matrix}
                  P^{-1}+O^{\mathrm{T}}\mathbf{R}^{-1}O& O^{\mathrm{T}}\mathbf{R}^{-1}\Gamma  \\
                   \Gamma^{\mathrm{T}}\mathbf{R}^{-1}O& \mathbf{Q}^{-1}+\Gamma^{\mathrm{T}}\mathbf{R}^{-1}\Gamma
                \end{matrix}\right]
   \left[\begin{array}{c}
  \hat{x}_{k-N|k} \\
  \big\{\hat{w}\big\}^{k-1}_{k-N}
\end{array}\right]
  =  \left[\begin{array}{c}
P^{-1}\bar{x}_{k-N}+O^{\mathrm{T}}\mathbf{R}^{-1}\big(\{y\}^{k}_{k-N}-\Lambda\{u\}_{k-N}^{k-1}\big) \\
         \Gamma^{\mathrm{T}}\mathbf{R}^{-1}\big(\{y\}_{k-N}^{k}-\Lambda\{u\}_{k-N}^{k-1}\big)
       \end{array}\right]
  \end{equation}
$P^{-1}+O^{\mathrm{T}}\mathbf{R}^{-1}O$ and $\mathbf{Q}^{-1}+\Gamma^{\mathrm{T}}\mathbf{R}^{-1}\Gamma$ are positive definite and are invertible.
Moreover, according to \eqref{DMHE_no_infinity}, Lemma \ref{inversion}, and Lemma \ref{lemma:2},
 the optimal estimates of the state and disturbances can be computed following:
  \begin{equation}\label{eq:25}
   \left[\begin{array}{c}
  \hat{x}_{k-N|k} \\
  \{\hat{w}\}^{k-1}_{k-N}
  \end{array}\right]
  =  \left[\begin{array}{cc}
             H_{11} & H_{12} \\
             H_{21} & H_{22}
           \end{array}
  \right]
 \left[\begin{array}{c}
P^{-1}\bar{x}_{k-N}+O^{\mathrm{T}}\mathbf{R}^{-1}\big(\{y\}^{k}_{k-N}-\Lambda\{u\}_{k-N}^{k-1}\big) \\
         \Gamma^{\mathrm{T}}\mathbf{R}^{-1}\big(\{y\}_{k-N}^{k}-\Lambda\{u\}_{k-N}^{k-1}\big)
       \end{array}\right]
  \end{equation}
where
\begin{align}\label{H}
  H_{11} &=\big(P^{-1}+O^{\mathrm{T}}\mathbf{R}^{-1}O-O^{\mathrm{T}}\mathbf{R}^{-1}\Gamma\big(\mathbf{Q}^{-1}+\Gamma^{\mathrm{T}}\mathbf{R}^{-1}\Gamma\big)^{-1}\Gamma^{\mathrm{T}}\mathbf{R}^{-1}O\big)^{-1}\nonumber\\
   & = \big(P^{-1}+O^{\mathrm{T}}\big(\mathbf{R}^{-1}-\mathbf{R}^{-1}\Gamma \big(\mathbf{Q}^{-1}+\Gamma^{\mathrm{T}}\mathbf{R}^{-1}\Gamma\big)^{-1}\Gamma^{\mathrm{T}}\mathbf{R}^{-1}\big)O)^{-1}\\
   &= \big(P^{-1}+O^{\mathrm{T}}\big(\mathbf{R}+\Gamma \mathbf{Q}\Gamma^{\mathrm{T}}\big)^{-1}O\big)^{-1}\nonumber
\end{align}
and
\begin{equation*}
  H_{12}=-H_{11}O^{\mathrm{T}}R\Gamma(Q^{-1}+\Gamma^{\mathrm{T}}R^{-1}\Gamma)^{-1}
\end{equation*}
From (\ref{eq:25}),
\begin{equation}\label{DMHE_no_infinity2}
  \hat{x}_{k-N|k}=H_{11}P^{-1}\bar{x}_{k-N}+\big(H_{11}O^{\mathrm{T}}\mathbf{R}^{-1}+H_{12}\Gamma^{\mathrm{T}}\mathbf{R}^{-1}\big)\Big(\{y\}_{k-N}^{k}-\Lambda\{u\}_{k-N}^{k-1}\Big)
\end{equation}
In the absence of system disturbances and measurement noise, the estimation error is calculated as
\begin{align*}
  e_{k-N} & = \hat{x}_{k-N|k} -x_{k-N} \\
   & = H_{11}P^{-1}\bar{x}_{k-N}+(H_{11}O^{\mathrm{T}}\mathbf{R}^{-1}+H_{12}\Gamma^{\mathrm{T}}\mathbf{R}^{-1})Ox_{k-N}-x_{k-N}
\end{align*}
Due to the fact that $\bar{x}_{k-N}=A\hat{x}_{k-N-1|k-1}+Bu_{k-N-1}$, based on model \eqref{model1}, it yields
\begin{align}\label{error1}
  e_{k-N} &= H_{11}P^{-1}A\hat{x}_{k-N-1|k-1}+ H_{11}P^{-1}Bu_{k-N-1}\nonumber\\
  &\quad+\big(H_{11}O^{\mathrm{T}}\mathbf{R}^{-1}O+H_{12}\Gamma^{\mathrm{T}}\mathbf{R}^{-1}O-I\big)x_{k-N}\nonumber\\
  &=H_{11}P^{-1}Ae_{k-N-1}+H_{11}P^{-1}Ax_{k-N-1}+ H_{11}P^{-1}Bu_{k-N-1}\\
  &\quad+\big(H_{11}O^{\mathrm{T}}\mathbf{R}^{-1}O+H_{12}\Gamma^{\mathrm{T}}\mathbf{R}^{-1}O-I\big)(Ax_{k-N-1}+Bu_{k-N-1})\nonumber\\
  &=\big(H_{11}O^{\mathrm{T}}\mathbf{R}^{-1}O+H_{12}\Gamma^{\mathrm{T}}\mathbf{R}^{-1}O+H_{11}P^{-1}-I\big)(Ax_{k-N-1}+Bu_{k-N-1})\nonumber\\
  &\quad+H_{11}P^{-1}Ae_{k-N-1}\nonumber
\end{align}
Based on \eqref{H} and the Lemma \ref{inversion}, we can further obtain that
\begin{align}\label{matrix_inverse_apply}
    & \quad H_{11}O^{\mathrm{T}}\mathbf{R}^{-1}O+H_{12}\Gamma^{\mathrm{T}}\mathbf{R}^{-1}O+H_{11}P^{-1} \nonumber\\
  &= H_{11}\Big(P^{-1}+O^{\mathrm{T}}\mathbf{R}^{-1}O-O^{\mathrm{T}}\mathbf{R}^{-1}\Gamma\big(\mathbf{Q}^{-1}+\Gamma^{\mathrm{T}}\mathbf{R}^{-1}\Gamma\big)^{-1}\Gamma^{\mathrm{T}}\mathbf{R}^{-1}O\Big) \nonumber \\
  &=H_{11}\big(P^{-1}+O^{\mathrm{T}}\big(\mathbf{R}+\Gamma \mathbf{Q}\Gamma^{\mathrm{T}}\big)^{-1}O\big)\\
   &= H_{11}H_{11}^{-1} \nonumber\\
  &=I\nonumber
\end{align}
By substituting \eqref{matrix_inverse_apply} into \eqref{error1}, the estimation error dynamics are derived as:
\begin{align*}
  e_{k-N} & = H_{11}P^{-1}Ae_{k-N-1} \\
   &= \big(P^{-1}+O^{\mathrm{T}}\big(\mathbf{R}+\Gamma \mathbf{Q}\Gamma^{\mathrm{T}}\big)^{-1}O\big)^{-1}P^{-1}Ae_{k-N-1}
\end{align*}
If $P$, $Q$, and $R$ are tuned such that condition \eqref{stability_no} holds, then the error dynamics are asymptotically stable. $\square$
\end{proof}

\begin{rmk}
Condition \eqref{stability_no} can be satisfied for any matrix $A$. Let us study 2 cases: 1) $\|A\|\leq1$; 2) $\|A\|> 1$.

First, we consider the case when $\|A\|\leq1$. For any positive definite matrix $O^{\mathrm{T}}(\mathbf{R}+\Gamma \mathbf{Q}\Gamma^{\mathrm{T}})^{-1}OP$, it holds that $\|(I+O^{\mathrm{T}}(\mathbf{R}+\Gamma \mathbf{Q}\Gamma^{\mathrm{T}})^{-1}OP)^{-1}\|<1$. For any $\|A\|\leq1$, we have
\begin{align*}
     &\quad\rho\left((P^{-1}+O^{\mathrm{T}}(\mathbf{R}+\Gamma \mathbf{Q}\Gamma^{\mathrm{T}})^{-1}O)^{-1}P^{-1}A\right)  \\
     &\leq\|(P^{-1}+O^{\mathrm{T}}(\mathbf{R}+\Gamma \mathbf{Q}\Gamma^{\mathrm{T}})^{-1}O)^{-1}P^{-1}A\|  \\
     &\leq \|(I+O^{\mathrm{T}}(\mathbf{R}+\Gamma \mathbf{Q}\Gamma^{\mathrm{T}})^{-1}OP)^{-1}\|\|A\|\\
     &<1,
\end{align*}
which ensures that \eqref{stability_no} is satisfied.

Next, we consider the cases when $\|A\|> 1$. Based on spectral decomposition, for real symmetric matrix $O^{\mathrm{T}}(\mathbf{R}+\Gamma \mathbf{Q}\Gamma^{\mathrm{T}})^{-1}O$, there exists an orthogonal matrix $S$ and a diagonal matrix $E=\mathrm{diag}\{\lambda_{1}, \lambda_{2}, \ldots, \lambda_{n_{x}}\}$, such that $O^{\mathrm{T}}(\mathbf{R}+\Gamma \mathbf{Q}\Gamma^{\mathrm{T}})^{-1}O = S^{\mathrm{T}}$ES. $S$ consists of the eigenvectors of $O^{\mathrm{T}}(\mathbf{R}+\Gamma \mathbf{Q}\Gamma^{\mathrm{T}})^{-1}O$ and the diagonal elements $\lambda_{i}$, $i=1,2,\ldots,n_x$, of $E$ are corresponding eigenvalues. Let diagonal matrix $P=\mathrm{diag}\{p_1, p_2, \ldots, p_{n_{x}}\}$. Due to the fact that $S^{\mathrm{T}}=S^{-1}$, one yields
\begin{align*}\label{spectral}
  (P^{-1}+O^{\mathrm{T}}(\mathbf{R}+\Gamma \mathbf{Q}\Gamma^{\mathrm{T}})^{-1}O)^{-1}P^{-1} &=  (I+O^{\mathrm{T}}(\mathbf{R}+\Gamma \mathbf{Q}\Gamma^{\mathrm{T}})^{-1}OP)^{-1} \\
   & = \left(S^{\mathrm{T}}\mathrm{diag}\{1+\lambda_{1}p_{1}, 1+\lambda_{2}p_{2}, \ldots,1+\lambda_{n_{x}}p_{n_{x}}\}S\right)^{-1}\\
   & = S^{\mathrm{T}}\mathrm{diag}\Big\{\frac{1}{1+\lambda_{1}p_{1}}, \frac{1}{1+\lambda_{2}p_{2}}, \ldots,\frac{1}{1+\lambda_{n_{x}}p_{n_{x}}}\Big\}S
\end{align*}
Hence, $\frac{1}{1+\lambda_{i}p_{i}}$, $i=1,2,\ldots, n_{x}$, are the eigenvalues of matrix $(P^{-1}+O^{\mathrm{T}}(\mathbf{R}+\Gamma \mathbf{Q}\Gamma^{\mathrm{T}})^{-1}O)^{-1}P^{-1}$. Since the spectral radius of a square matrix is maximum of the absolute values of its eigenvalues, the increase in the values of the diagonal elements $p_{i}$ will decrease the eigenvalue $\frac{1}{1+\lambda_{i}p_{i}}$ of matrix $(P^{-1}+O^{\mathrm{T}}(\mathbf{R}+\Gamma \mathbf{Q}\Gamma^{\mathrm{T}})^{-1}O)^{-1}P^{-1}$, thus decreasing its spectral radius. Therefore, for the cases when $\|A\|>1$, the diagonal elements of matrix $P$ can be made sufficiently large to satisfy the convergence condition \eqref{stability_no} of DMHE-1. It is worth mentioning that for the cases when $\|A\|>1$, excessively large diagonal elements of matrix $P$ will render the iterative execution of the estimators divergent. Therefore, in such cases when $\|A\|>1$, $P$ should be tuned such that a balance may be achieved between the satisfaction of \eqref{convergence_no} and the satisfaction of \eqref{stability_no}.
\end{rmk}

\subsection{Convergence of the constrained iterative DMHE (DMHE-2)}
In this subsection, the convergence of the proposed DMHE-2 algorithm in \eqref{DMHE2} is analyzed in the presence of constraints on subsystem states and disturbances.


\begin{definition}(Lipschitz Continuity of $\bigtriangledown \Phi$ \cite{bertsekas2015parallel})\label{lipschitz}
For continuous differentiable function $\Phi$, if there exists a Lipschitz constant $K$ such that its gradient $\bigtriangledown\Phi$ satisfies
  \begin{equation}\label{K}
    \|\bigtriangledown \Phi(a)-\bigtriangledown \Phi(b)\|\leq K\|a-b\|, ~\forall a,b\in \mathbb{R},
  \end{equation}
then the gradient $\bigtriangledown \Phi$ is Lipschitz continuous.
\end{definition}

\begin{lemma}(The scaled gradient projection \cite{bertsekas2015parallel})\label{SGP_convergence}
For convex function $\Phi>0$ of which Lipschitz continuous gradient $\bigtriangledown\Phi$ has a Lipschitz constant $K$, given a symmetric and invertible matrix $F$, if there exists $\alpha >0$ such that:
\begin{equation}\label{alpha}
  (a-b)^\mathrm{T}F(a-b)\geq\alpha\|a-b\|^{2}, ~\forall a,b\in\mathbb{Z}
\end{equation}
where $\mathbb{Z}$ is assumed to be a nonempty, closed and convex set, then for any $\gamma$ satisfying $0<\gamma <\frac{2\alpha}{K}$, the sequence of $\hat{z}^{[p]}$ updated by 
\begin{equation}\label{SGP_function}
  \hat{z}^{[p]}=\Big[\hat{z}^{[p-1]}-\gamma F^{-1}\bigtriangledown \Phi(\hat{z}^{[p-1]})\Big]^+,~p=1,2,\ldots
\end{equation}
converges, and ultimately minimizes $\Phi$.

In addition, the solution for the scaled gradient projection scheme in \eqref{SGP_function} is equivalent to the following quadratic optimization problem:
\begin{equation}\label{SGP_centralized}
\hat{z}^{[p]}=\arg\min_{\hat{z}\in\mathbb{Z}}\Big\{\frac{1}{2\gamma}(\hat{z}-\hat{z}^{[p-1]})^{\mathrm{T}}F(\hat{z}-\hat{z}^{[p-1]})+(\hat{z}-\hat{z}^{[p-1]})^{\mathrm{T}}\bigtriangledown\Phi(\hat{z}^{[p-1]})\Big\}
\end{equation}
\end{lemma}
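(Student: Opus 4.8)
The plan is to treat \eqref{SGP_function} as a scaled (variable-metric) projected-gradient step and to establish convergence through a standard sufficient-decrease argument, with the step-size bound $0<\gamma<2\alpha/K$ emerging precisely as the condition that guarantees monotone descent of $\Phi$ along the iterates.

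First I would verify the equivalence between \eqref{SGP_function} and \eqref{SGP_centralized}. Introducing the displacement $d=\hat z-\hat z^{[p-1]}$ and completing the square in the $F$-weighted quadratic term shows that \eqref{SGP_centralized} amounts to minimizing $\tfrac{1}{2\gamma}\big\|\hat z-\big(\hat z^{[p-1]}-\gamma F^{-1}\bigtriangledown\Phi(\hat z^{[p-1]})\big)\big\|_F^2$ over $\hat z\in\mathbb Z$, i.e. the projection of the unconstrained scaled-gradient point onto $\mathbb Z$ in the $F$-metric, which is exactly \eqref{SGP_function}. The first-order optimality condition for the convex program \eqref{SGP_centralized} then yields the variational inequality $\big(\tfrac{1}{\gamma}F(\hat z^{[p]}-\hat z^{[p-1]})+\bigtriangledown\Phi(\hat z^{[p-1]})\big)^{\mathrm T}(\hat z-\hat z^{[p]})\ge 0$ for every $\hat z\in\mathbb Z$, and this inequality is the workhorse for the descent estimate.

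Next I would derive the sufficient-decrease inequality. Setting $\hat z=\hat z^{[p-1]}$ in the variational inequality and invoking \eqref{alpha} gives $\bigtriangledown\Phi(\hat z^{[p-1]})^{\mathrm T}(\hat z^{[p]}-\hat z^{[p-1]})\le -\tfrac{\alpha}{\gamma}\|\hat z^{[p]}-\hat z^{[p-1]}\|^2$. Combining this with the descent lemma implied by the Lipschitz bound \eqref{K}, namely $\Phi(\hat z^{[p]})\le\Phi(\hat z^{[p-1]})+\bigtriangledown\Phi(\hat z^{[p-1]})^{\mathrm T}(\hat z^{[p]}-\hat z^{[p-1]})+\tfrac{K}{2}\|\hat z^{[p]}-\hat z^{[p-1]}\|^2$, produces
\[
\Phi(\hat z^{[p]})\le\Phi(\hat z^{[p-1]})-\Big(\tfrac{\alpha}{\gamma}-\tfrac{K}{2}\Big)\big\|\hat z^{[p]}-\hat z^{[p-1]}\big\|^2 .
\]
The coefficient $\tfrac{\alpha}{\gamma}-\tfrac{K}{2}$ is strictly positive exactly when $\gamma<2\alpha/K$, which is where the stated step-size restriction is consumed.

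Finally, since $\Phi>0$ is bounded below and $\{\Phi(\hat z^{[p]})\}$ is monotonically nonincreasing, the objective values converge; telescoping the decrease then forces $\|\hat z^{[p]}-\hat z^{[p-1]}\|\to 0$. I would close the argument by showing that every limit point $\hat z^\star$ satisfies the fixed-point relation $\hat z^\star=[\hat z^\star-\gamma F^{-1}\bigtriangledown\Phi(\hat z^\star)]^+$, which for convex $\Phi$ over the closed convex set $\mathbb Z$ is exactly the first-order condition characterizing a global minimizer, so convexity upgrades subsequential convergence to convergence of the whole sequence to $\arg\min_{\mathbb Z}\Phi$. The main obstacle I anticipate is the bookkeeping of the metric induced by $F$: the update \eqref{SGP_function} is written with the Euclidean projection $[\cdot]^+$, whereas the equivalent program \eqref{SGP_centralized} implicitly uses the $F$-weighted norm, so care is needed to keep the variational inequality and the constant $\alpha$ stated consistently with respect to the same inner product; the condition \eqref{alpha} is precisely what reconciles the $F$-metric with the Euclidean norm appearing in the Lipschitz estimate \eqref{K}.
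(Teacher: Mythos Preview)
The paper does not prove this lemma; it is quoted verbatim from \cite{bertsekas2015parallel} and invoked as a black box in the proof of Theorem~\ref{thm3}. So there is no ``paper's own proof'' to compare against --- you have supplied an argument where the authors simply cite one.

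That said, your sketch is the standard Bertsekas--Tsitsiklis argument and is essentially correct. The equivalence via completing the square, the variational inequality at $\hat z=\hat z^{[p-1]}$ combined with \eqref{alpha}, and the descent lemma from \eqref{K} assemble exactly as you describe into the sufficient-decrease estimate with coefficient $\alpha/\gamma-K/2$, and the step-size bound falls out. Two small points to tighten: (i) the jump from ``every limit point is a minimizer'' plus convexity to convergence of the \emph{whole} sequence needs either uniqueness of the minimizer or a Fej\'er-monotonicity style argument --- convexity alone does not do it; (ii) you correctly flag the metric mismatch between the Euclidean projection $[\cdot]^+$ in \eqref{SGP_function} and the $F$-weighted projection implicit in \eqref{SGP_centralized}. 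This is in fact an inconsistency in the lemma as stated in the paper (the two are only equivalent when $F$ is a multiple of the identity or the projection happens to coincide), and your instinct to interpret $[\cdot]^+$ as the $F$-metric projection, with \eqref{alpha} mediating between the two norms, is the right resolution.
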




\begin{theorem}\label{thm3}
At each sampling instant $k\geq N$, the estimates of the subsystem state and disturbances (i.e.,  $\hat{x}_{k-N|k}^{i,[p]}$ and $\{\hat{w}^{i,[p]}\}_{k-N}^{k-1}$) provided by DMHE-2 converge to the solution for CMHE \eqref{CMHE_opt} as the estimators are executed for infinite iteration steps, if
\begin{equation}\label{convergence}
    2\lambda_{\min}((\tilde{\mathbf{Q}}^{-1}+\Pi^{\mathrm{T}}{\bf{R}}\Pi)_d)>\lambda_{\max}(\tilde{\mathbf{Q}}^{-1}+\Pi^{\mathrm{T}}{\bf{R}}^{-1}\Pi)
  \end{equation}
  where $\tilde{\mathbf{Q}}=\mathrm{diag}\{\tilde{\mathbf{Q}}_{1}, \tilde{\mathbf{Q}}_{2}, \ldots, \tilde{\mathbf{Q}}_n\}$ with $\tilde{\mathbf{Q}}_i = \mathrm{diag}\{P_i, \mathbf{\mathbf{Q}}_{i}\}$, and $\Pi=[\Pi_{1}, \Pi_{2}, \ldots, \Pi_{n}]$ with $\Pi_i = [O_{[:,i]}, \Gamma_{[:,i]}]$.
\end{theorem}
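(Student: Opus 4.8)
The plan is to reinterpret one sweep of the block-decoupled DMHE-2 updates as a single step of the scaled gradient projection scheme of Lemma~\ref{SGP_convergence} applied to the centralized objective, after which convergence follows immediately from that lemma. First I would stack the subsystem decision variables into $\hat{z}^{[p]} = \mathrm{col}(\hat{z}^{1,[p]}, \ldots, \hat{z}^{n,[p]})$ with $\hat{z}^{i,[p]} = \mathrm{col}(\hat{x}^{i,[p]}_{k-N|k}, \{\hat{w}^{i,[p]}\}_{k-N}^{k-1})$, and rewrite the centralized objective \eqref{CMHE_obj} (with the model constraint substituted) as the single quadratic $\Phi(\hat{z}) = \tfrac{1}{2}\|\hat{z}-\bar{z}\|^2_{\tilde{\mathbf{Q}}^{-1}} + \tfrac{1}{2}\|\{y\}_{k-N}^{k} - \Pi\hat{z} - \Lambda\{u\}_{k-N}^{k-1}\|^2_{\mathbf{R}^{-1}}$, where $\bar{z}^i = \mathrm{col}(\bar{x}^i_{k-N}, 0)$ and $\Pi\hat{z} = O\hat{x}_{k-N|k} + \Gamma\{\hat{w}\}_{k-N}^{k-1}$ follow from the block definitions $\tilde{\mathbf{Q}}_i = \mathrm{diag}\{P_i, \mathbf{Q}_i\}$ and $\Pi_i = [O_{[:,i]}, \Gamma_{[:,i]}]$. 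This exposes the constant Hessian $H = \tilde{\mathbf{Q}}^{-1} + \Pi^{\mathrm{T}}\mathbf{R}^{-1}\Pi$, so $\Phi$ is convex and its gradient is Lipschitz with constant $K = \lambda_{\max}(H)$.

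The crux is to verify that the $i$th DMHE-2 subproblem, namely minimizing $\bar{\Phi}^{i,[p]}_k$ over $\hat{z}^i \in \mathbb{X}^i\times\mathbb{W}^i$ with all $\hat{z}^{l,[p-1]}$, $l\neq i$, frozen, is exactly the scaled gradient projection \eqref{SGP_centralized} with $F = H_d := (\tilde{\mathbf{Q}}^{-1}+\Pi^{\mathrm{T}}\mathbf{R}^{-1}\Pi)_d$ and $\gamma = 1$. To this end I would differentiate $\bar{\Phi}^{i,[p]}_k$ in $\hat{z}^i$ and check the algebraic identity $\nabla_{\hat{z}^i}\bar{\Phi}^{i,[p]}_k = H_{ii}(\hat{z}^i - \hat{z}^{i,[p-1]}) + [\nabla\Phi(\hat{z}^{[p-1]})]_i$, where $H_{ii} = \tilde{\mathbf{Q}}_i^{-1} + \Pi_i^{\mathrm{T}}\mathbf{R}^{-1}\Pi_i$ is the $i$th diagonal block of $H$; the off-diagonal coupling $\Pi_i^{\mathrm{T}}\mathbf{R}^{-1}\sum_{l\neq i}\Pi_l\hat{z}^{l,[p-1]}$ is precisely what the frozen-neighbor term supplies. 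Since the two quadratics then share the same Hessian block and the same minimizer over the convex set $\mathbb{X}^i\times\mathbb{W}^i$, and since $\mathbb{Z} = \prod_{i\in\mathbb{N}}(\mathbb{X}^i\times\mathbb{W}^i)$ is a product set with block-diagonal $F = H_d$, the projection decouples across subsystems and the full DMHE-2 sweep coincides with \eqref{SGP_function}.

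With the identification established, I would close by discharging the hypotheses of Lemma~\ref{SGP_convergence}: $\Phi$ is convex with $K = \lambda_{\max}(\tilde{\mathbf{Q}}^{-1}+\Pi^{\mathrm{T}}\mathbf{R}^{-1}\Pi)$; $F = H_d$ is symmetric, invertible, and satisfies \eqref{alpha} with $\alpha = \lambda_{\min}((\tilde{\mathbf{Q}}^{-1}+\Pi^{\mathrm{T}}\mathbf{R}^{-1}\Pi)_d)$; and $\mathbb{Z}$ is nonempty, closed, and convex. The lemma then yields convergence to the minimizer of $\Phi$, i.e., the CMHE solution, provided $0 < \gamma < 2\alpha/K$. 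Because the induced step size is $\gamma = 1$, this requirement reduces to $2\alpha > K$, which is exactly the convergence condition \eqref{convergence}.

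I expect the main obstacle to be this second step, the equivalence: carefully tracking the coupling terms in the gradient computation to confirm that the frozen-neighbor DMHE-2 update reproduces the centralized scaled-gradient step with the specific scaling $F = H_d$ and unit step $\gamma = 1$, rather than some other $F$ or $\gamma$. In particular, one must check that the per-subsystem minimizer is genuinely the block of the centralized update and that the product structure of $\mathbb{Z}$ together with block-diagonal $F$ legitimately decouples the projection; once that is pinned down, the remaining convergence claim is a direct consequence of Lemma~\ref{SGP_convergence}.
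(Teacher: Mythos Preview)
Your proposal is correct and follows essentially the same approach as the paper: both recast one DMHE-2 sweep as a scaled gradient projection step for the centralized objective with $F=(\tilde{\mathbf{Q}}^{-1}+\Pi^{\mathrm{T}}\mathbf{R}^{-1}\Pi)_d$ and $\gamma=1$, then invoke Lemma~\ref{SGP_convergence} with $\alpha=\lambda_{\min}(F)$ and $K=\lambda_{\max}(\tilde{\mathbf{Q}}^{-1}+\Pi^{\mathrm{T}}\mathbf{R}^{-1}\Pi)$. The only cosmetic difference is that the paper establishes the equivalence by expanding both objectives and matching terms, whereas you propose the cleaner route of matching the gradients directly.
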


\begin{proof}
This proof is inspired by the proof of Theorem 3 in \cite{schneider2015convergence}. Define $\hat{z}^{i,[p]}_{k-N} = \left[\begin{array}{c}
                                                                                                       \hat{x}_{k-N|k}^{i,[p]} \\
                                                                                                       \{\hat{w}^{i,[p]}\}_{k-N}^{k-1}
                                                                                                     \end{array}\right]$,
                                                                                                     $\bar{z}_{k-N}^{i}=\left[\begin{array}{c}
                                                                                                                                \bar{x}_{k-N}^{i} \\
                                                                                                                                \mathbf{0}
                                                                                                                              \end{array}\right]$, where $\mathbf{0}$ is the zero vector of which the dimension is the same as $\{\hat{w}^{i,[p]}\}_{k-N}^{k-1}$.
The objective function $\bar{\Phi}_{k}^{i,[p]}$ can be rewritten as
\begin{align}\label{PMHE_obj_extended}
   \bar{\Phi}_{k}^{i,[p]} &=   \frac{1}{2}\Big(\|\hat{z}^{i,[p]}_{k-N}\|^2_{\tilde{\mathbf{Q}}_i^{-1}}+\big\|\{y\}_{k-N}^{k}-\Pi_i \hat{z}^{i,[p]}_{k-N}-\sum_{l\in\mathbb{N}\setminus\{i\}}\Pi_l \hat{z}^{l,[p-1]}_{k-N}-\Lambda\{u\}_{k-N}^{k-1}\big\|^{2}_{{\bf{R}}^{-1}}\Big)\nonumber\\
   &\quad+\frac{1}{2}\bar{z}^{i,\mathrm{T}}_{k-N}\tilde{\mathbf{Q}}^{-1}_{i}\bar{z}^{i}_{k-N}-\hat{z}^{i,[p],\mathrm{T}}_{k-N}\tilde{\mathbf{Q}}^{-1}_{i}\bar{z}^{i}_{k-N}\\
   &=  \frac{1}{2}\hat{z}_{k-N}^{i,[p],\mathrm{T}}(\tilde{\mathbf{Q}}^{-1}_{i}+\Pi_{i}^{\mathrm{T}}{\bf{R}}^{-1}\Pi_{i})\hat{z}_{k-N}^{i,[p]}-\hat{z}_{k-N}^{i,[p],\mathrm{T}}\Pi_{i}^{\mathrm{T}}{\bf{R}}^{-1}\Big(\{y\}_{k-N}^{k}-\sum_{l\in\mathbb{N}\setminus\{i\}}\Pi_l \hat{z}_{k-N}^{l,[p-1]}-\Lambda\{u\}_{k-N}^{k-1}\Big)\nonumber\\
   &\quad-\hat{z}_{k-N}^{i,[p],\mathrm{T}}\tilde{\mathbf{Q}}^{-1}_{i}\bar{z}_{k-N}^{i}+\frac{1}{2}\bar{z}^{i,\mathrm{T}}_{k-N}\tilde{\mathbf{Q}}^{-1}_{i}\bar{z}^{i}_{k-N}+\frac{1}{2}\{y\}_{k-N}^{k,\mathrm{T}}{\bf{R}}^{-1}\{y\}_{k-N}^{k}\nonumber\\
   &\quad+\frac{1}{2}\Big(\sum_{l\in\mathbb{N}\setminus\{i\}}\hat{z}_{k-N}^{l,[p-1],\mathrm{T}}\Pi_{l}^{\mathrm{T}}\Big){\bf{R}}^{-1}\Big(\sum_{l\in\mathbb{N}\setminus\{i\}}\Pi_{l}\hat{z}_{k-N}^{l,[p-1]}\Big)+\frac{1}{2}\{u\}_{k-N}^{k-1,\mathrm{T}}\Lambda^{\mathrm{T}}{\bf{R}}^{-1}\Lambda\{u\}_{k-N}^{k-1}\nonumber\\
   &\quad-\{y\}_{k-N}^{k,\mathrm{T}}{\bf{R}}^{-1}\Big(\sum_{l\in\mathbb{N}\setminus\{i\}}\Pi_{l}\hat{z}_{k-N}^{l,[p-1]}+\Lambda\{u\}_{k-N}^{k-1}\Big)+\Big(\sum_{l\in\mathbb{N}\setminus\{i\}}\hat{z}_{k-N}^{l,[p-1],\mathrm{T}}\Pi_l^{\mathrm{T}}\Big){\bf{R}}^{-1}\Lambda\{u\}_{k-N}^{k-1}\nonumber
\end{align}
At every time instant $k$, for \eqref{SGP_centralized} in Lemma \ref{SGP_convergence}, if we replace $\hat{z}^{[p-1]}$ with $\hat{z}^{[p-1]}_{k-N}$, $\hat{z}$ with $\hat{z}_{k-N}^{[p]}$, and $F$ with a block diagonal matrix with main diagonals being $F_i$, then \eqref{SGP_centralized} can be rewritten as
\begin{equation}\label{SGP_medium} \hat{z}^{[p]}_{k-N}=\arg\min_{\hat{z}_{k-N}^{[p]}\in\mathbb{Z}}\sum_{i=1}^{n}\Big\{\frac{1}{2\gamma}(\hat{z}^{i,[p]}_{k-N}-\hat{z}^{i,[p-1]}_{k-N})^{\mathrm{T}}F_i(\hat{z}^{i,[p]}_{k-N}-\hat{z}_{k-N}^{i,[p-1]})+(\hat{z}_{k-N}^{i,[p]}-\hat{z}_{k-N}^{i,[p-1]})^{\mathrm{T}}\bigtriangledown_{i} \Phi(\hat{z}_{k-N}^{[p-1]})\Big\}
\end{equation}
where $\hat{z}_{k-N}^{[p]}=\mathrm{col}(\hat{z}_{k-N}^{1,[p]},\hat{z}_{k-N}^{2,[p]},\ldots,\hat{z}_{k-N}^{n,[p]})$; $\bigtriangledown_{i}\Phi(z)$ contains the elements of the partial derivative vector $\bigtriangledown\Phi$ that correspond to $\hat{z}^{i}$. For non-empty, closed, and convex $\mathbb{Z}=\mathbb{Z}^{1}\times\mathbb{Z}^{2}\times\ldots\times\mathbb{Z}^{n}$, parallelization of \eqref{SGP_medium} is feasible, and we can obtain
\begin{equation}\label{SGP_distributed} \hat{z}_{k-N}^{{i},[p]}=\arg\min_{\hat{z}_{k-N}^{i,[p]}\in\mathbb{Z}^{i}}\Big\{\frac{1}{2\gamma}(\hat{z}_{k-N}^{i,[p]}-\hat{z}_{k-N}^{i,[p-1]})^{\mathrm{T}}F_i(\hat{z}_{k-N}^{i,[p]}-\hat{z}_{k-N}^{i,[p-1]})+(\hat{z}_{k-N}^{i,[p]}-\hat{z}_{k-N}^{i,[p-1]})^{\mathrm{T}}\bigtriangledown_{i} \Phi(\hat{z}^{[p-1]}_{k-N})\Big\}
\end{equation}
Choose $\Phi = \Phi_{k}$ in \eqref{CMHE_obj}, one yields
\begin{align}\label{nabla}
  \bigtriangledown_{i}\Phi_{k}(\hat{z}_{k-N}^{[p-1]})&=(\tilde{\mathbf{Q}}^{-1}_{i}+\Pi_{i}^{\mathrm{T}}{\bf{R}}^{-1}\Pi_{i})\hat{z}_{k-N}^{i,[p-1]}-\tilde{\mathbf{Q}}^{-1}_{i}\bar{z}_{k-N}^{i}\nonumber\\
  &\quad-\Pi_{i}^{\mathrm{T}}{\bf{R}}^{-1}\Big(\{y\}_{k-N}^{k}-\sum_{l\in\mathbb{N}\setminus\{i\}}\Pi_l \hat{z}_{k-N}^{l,[p-1]}-\Lambda\{u\}_{k-N}^{k-1}\Big)
\end{align}
For \eqref{SGP_distributed}, let $\gamma=1$ and $F=\tilde{\mathbf{Q}}^{-1}+(\Pi^{\mathrm{T}}{\bf{R}}^{-1}\Pi)_{d}$ with $F_i = \tilde{\mathbf{Q}}_i^{-1}+\Pi_{i}^{\mathrm{T}}{\bf{R}}^{-1}\Pi_{i}$.
Substituting \eqref{nabla} into \eqref{SGP_distributed} yields
\begin{align}\label{sgd}
  \hat{z}_{k-N}^{i,[p]} &= \mathrm{arg}\min_{\hat{z}_{k-N}^{i,[p]}\in\mathbb{Z}^{i}}\Big\{\frac{1}{2}\hat{z}_{k-N}^{i,[p],\mathrm{T}}(\tilde{\mathbf{Q}}^{-1}_i+\Pi_i^{\mathrm{T}}{\bf{R}}^{-1}\Pi_i)\hat{z}_{k-N}^{i,[p]}-\hat{z}_{k-N}^{i,[p],\mathrm{T}}\tilde{\mathbf{Q}}^{-1}_{i} \bar{z}_{k-N}^{i}  \nonumber\\
   &\quad  -\hat{z}_{k-N}^{i,[p],\mathrm{T}}\Pi_i^{\mathrm{T}}{\bf{R}}^{-1}\Big(\{y\}_{k-N}^{k}-\sum_{l\in\mathbb{N}\setminus\{i\}}\Pi_l \hat{z}_{k-N}^{l,[p-1]}-\Lambda\{u\}_{k-N}^{k-1}\Big) \nonumber \\
   &\quad+\hat{z}_{k-N}^{i,[p-1],\mathrm{T}}\Pi_{i}^{\mathrm{T}}{\bf{\mathbf{Q}}}^{-1}\Big(\{y\}_{k-N}^{k}-\sum_{l\in\mathbb{N}\setminus\{i\}}\Pi_{l}\hat{z}_{k-N}^{l,[p-1]}-\Lambda\{u\}_{k-N}^{k-1}\Big) \\
   &\quad-\frac{1}{2}\hat{z}_{k-N}^{i,[p-1],\mathrm{T}}(\tilde{\mathbf{Q}}^{-1}_i+\Pi_{i}^{\mathrm{T}}{\bf{R}}^{-1}\Pi_{i})\hat{z}_{k-N}^{i,[p-1]}+\hat{z}_{k-N}^{i,[p-1],\mathrm{T}}\tilde{\mathbf{Q}}^{-1}_{i}\bar{z}_{k-N}^{i}\Big\}\nonumber
\end{align}
The objective function of \eqref{sgd} is equivalent to $\bar{\Phi}_{k}^{i,[p]}$ of DMHE-2 algorithm defined in \eqref{PMHE_obj_extended} up to an additional constant.
Based on the relation between DMHE-2 algorithm and the scaled gradient projection algorithm, we can establish the convergence condition for DMHE-2. According to Lemma \ref{SGP_convergence}, the scaled gradient projection algorithm converges if $0<\gamma <\frac{2\alpha}{K}$. By choosing $\alpha=\lambda_{\min}(\tilde{R}^{-1}+\Pi^{\mathrm{T}}{\bf{Q}}^{-1}\Pi)_d$, \eqref{alpha} in Lemma \ref{SGP_convergence} can be satisfied. Furthermore, the objective function $\Phi_{k}$ in \eqref{DMHE_obj1} is convex and the Lipschitz condition \eqref{K} in Lemma \ref{SGP_convergence} can be satisfied by selecting $K=\lambda_{\max}(\tilde{R}^{-1}+\Pi^{\mathrm{T}}{\bf{Q}}^{-1}\Pi)$.  Consequently, the convergence condition \eqref{convergence} derived from the scaled gradient projection serves as the convergence condition of DMHE-2. $\square$

\end{proof}

\subsection{Stability of the constrained DMHE (DMHE-2)}
In this subsection, we prove the stability of the estimation error dynamics for the entire DMHE scheme.
\begin{theorem}
If Assumptions \ref{assume1} and \ref{assume2} hold, if the MHE-based estimators of DMHE-2 are executed for infinite iteration steps at each sampling instant $k$, and if tuning parameters $P$, $Q$, and $R$ are selected such that
\begin{equation}\label{DMHE_con_stability}
  \frac{8\lambda_{\min}(A^{\mathrm{T}}P^{-1}A)}{\lambda_{\max}(P^{-1}+O^{\mathrm{T}}(\mathbf{R}+\frac{1}{2}\Gamma \mathbf{Q}\Gamma^{\mathrm{T}})^{-1}O)}<1,
\end{equation}
then DMHE-2 is asymptotically stable.
\end{theorem}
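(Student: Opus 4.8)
The plan is to reduce the constrained problem to a centralized one and then use the optimal cost as a Lyapunov-type function. By Theorem \ref{thm3}, when the local estimators are iterated infinitely at each $k$, the DMHE-2 estimates coincide with the solution of the constrained centralized MHE in \eqref{CMHE_opt}; hence it suffices to establish asymptotic stability of the error dynamics of the constrained centralized MHE. As in the proof of Theorem \ref{theorem:2}, I would work in the deterministic (nominal) setting $w_k=0$, $v_k=0$, so that $\{y\}_{k-N}^{k}=Ox_{k-N}+\Lambda\{u\}_{k-N}^{k-1}$ and the output-fitting residual in \eqref{obj_p} collapses to $Oe_{k-N}-\Gamma\{\hat{w}\}_{k-N}^{k-1}$, where $e_{k-N}=x_{k-N}-\hat{x}_{k-N|k}$.

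Next I would build a two-sided bound on the optimal cost $\Phi_k^{*}$. For the upper bound I would exploit feasibility of the true trajectory: substituting $\hat{x}=x$ (admissible, since the true state lies in $\mathbb{X}$) and $\hat{w}=0\in\mathbb{W}$ makes both the disturbance penalty and the output-fitting penalty vanish, so $\Phi_k^{*}\le\tfrac12\|x_{k-N}-\bar{x}_{k-N}\|_{P^{-1}}^{2}$. Since $\bar{x}_{k-N}=A\hat{x}_{k-N-1|k-1}+Bu_{k-N-1}$ and $x_{k-N}=Ax_{k-N-1}+Bu_{k-N-1}$, this yields $\Phi_k^{*}\le\tfrac12\|Ae_{k-N-1}\|_{P^{-1}}^{2}$, which carries the $A^{\mathrm{T}}P^{-1}A$ factor of \eqref{DMHE_con_stability}.

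For the lower bound I would first use $\hat{x}_{k-N|k}-\bar{x}_{k-N}=Ae_{k-N-1}-e_{k-N}$ to rewrite the arrival-cost term, and then split the output-residual penalty $\tfrac12\|Oe_{k-N}-\Gamma\{\hat{w}\}\|_{\mathbf{R}^{-1}}^{2}$ into two equal halves. Retaining the full disturbance penalty $\tfrac12\|\{\hat{w}\}\|_{\mathbf{Q}^{-1}}^{2}$ together with one half of the residual penalty and minimizing the sum over $\{\hat{w}\}$, the matrix inversion Lemma \ref{inversion} produces a quadratic form in $O^{\mathrm{T}}(\mathbf{R}+\tfrac12\Gamma\mathbf{Q}\Gamma^{\mathrm{T}})^{-1}O$; the half-weighting on the residual is exactly what converts the natural $\mathbf{R}+\Gamma\mathbf{Q}\Gamma^{\mathrm{T}}$ into the $\mathbf{R}+\tfrac12\Gamma\mathbf{Q}\Gamma^{\mathrm{T}}$ appearing in \eqref{DMHE_con_stability}. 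Applying the elementary inequality $\|e_{k-N}-Ae_{k-N-1}\|^{2}\ge\tfrac12\|e_{k-N}\|^{2}-\|Ae_{k-N-1}\|^{2}$ to the arrival-cost term and discarding the remaining nonnegative half of the residual, I would arrive at a bound of the form $\Phi_k^{*}\ge c\,e_{k-N}^{\mathrm{T}}\big(P^{-1}+O^{\mathrm{T}}(\mathbf{R}+\tfrac12\Gamma\mathbf{Q}\Gamma^{\mathrm{T}})^{-1}O\big)e_{k-N}-c'\|Ae_{k-N-1}\|_{P^{-1}}^{2}$.

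Finally I would chain the two bounds: substituting the upper bound for $\Phi_k^{*}$ into the lower bound and rearranging eliminates $\Phi_k^{*}$ and leaves a contraction $e_{k-N}^{\mathrm{T}}Me_{k-N}\le\kappa\,e_{k-N-1}^{\mathrm{T}}A^{\mathrm{T}}P^{-1}Ae_{k-N-1}$ with $M=P^{-1}+O^{\mathrm{T}}(\mathbf{R}+\tfrac12\Gamma\mathbf{Q}\Gamma^{\mathrm{T}})^{-1}O$; bounding the two quadratic forms by the relevant extreme eigenvalues then gives $\|e_{k-N}\|^{2}$ decreasing geometrically, so condition \eqref{DMHE_con_stability} forces $e_{k-N}\to0$ as $k\to\infty$. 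I expect the main obstacle to be the lower bound: unlike the unconstrained Theorem \ref{theorem:2}, the projections onto $\mathbb{X}$ and $\mathbb{W}$ destroy the clean closed-form error recursion, so the argument must proceed entirely through inequalities, and the coefficient bookkeeping—obtaining the precise numerical factor in \eqref{DMHE_con_stability} and the half-weighting that produces $\mathbf{R}+\tfrac12\Gamma\mathbf{Q}\Gamma^{\mathrm{T}}$ rather than $\mathbf{R}+\Gamma\mathbf{Q}\Gamma^{\mathrm{T}}$—is delicate. I would also need to verify at the outset that the true trajectory is admissible for the constraint sets, since the entire upper bound rests on that feasibility.
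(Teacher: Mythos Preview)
Your approach is essentially the same as the paper's: both bound the optimal centralized cost from above by evaluating at the true trajectory, bound it from below by combining the arrival-cost term (via the elementary inequality $\|a\|^2\ge\tfrac12\|a+b\|^2-\|b\|^2$) with a minimization of the disturbance-plus-residual terms over $\hat w$ (the paper invokes this as inequality~(30) of \cite{battistelli2018distributed}, which is exactly your matrix-inversion-lemma computation), and then chain the two bounds into a contraction $\|e_{k-N}\|_{\Xi}^2\le 8\|e_{k-N-1}\|_{\Omega}^2$ with $\Xi=P^{-1}+O^{\mathrm T}(\mathbf R+\tfrac12\Gamma\mathbf Q\Gamma^{\mathrm T})^{-1}O$ and $\Omega=A^{\mathrm T}P^{-1}A$.

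The only substantive difference is the order of specialization. The paper keeps the noise and disturbance terms throughout, so the $\tfrac12$ in $\mathbf R+\tfrac12\Gamma\mathbf Q\Gamma^{\mathrm T}$ and the factor $8$ arise \emph{naturally} from two uses of the triangle-type inequality \eqref{triangle}: once in \eqref{y} to peel off $\Gamma\{w\}+\{v\}$ from the residual, and once after \eqref{bound of e} to peel off $w_{k-N-1}$ from $x_{k-N}-\bar x_{k-N}$. You instead set $w=v=0$ at the outset and then introduce the half-split of the residual \emph{artificially} to reproduce the $\tfrac12$. This is valid but unnecessary: in the purely deterministic setting neither triangle split is needed, and your argument without the split would in fact yield the sharper contraction $\|e_{k-N}\|_{\Xi'}^2\le 4\|e_{k-N-1}\|_{\Omega}^2$ with $\Xi'=P^{-1}+O^{\mathrm T}(\mathbf R+\Gamma\mathbf Q\Gamma^{\mathrm T})^{-1}O$, which still implies the theorem since condition \eqref{DMHE_con_stability} is stronger. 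The paper's route has the advantage of delivering, en passant, a bounded-error result under nonzero disturbances before specializing to the nominal case.
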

\begin{proof}
First, we establish an upper bound on optimal objective function $\Phi_{k}^{o}$ which has the following form:
\begin{equation}\label{optimal_objective}
  \Phi_{k}^{o}=\frac{1}{2}\|\hat{x}_{k-N|k}-\bar{x}_{k-N}\|^{2}_{P^{-1}}+\frac{1}{2}\sum_{j=k-N}^{k-1}\|\hat{w}_{j}\|_{Q^{-1}}^{2}+\frac{1}{2}\sum_{j=k-N}^{k}\|\hat{v}_{j}\|_{R^{-1}}^{2}
\end{equation}
According to the optimality of $\hat{x}_{k-N|k}$, it follows that $\Phi_{k}^{o}\leq\Phi_{k}(x_{k-N})$, and we have
\begin{align}\label{upper_bound}
  \Phi_{k}^{o} & \leq\frac{1}{2}\|x_{k-N}-\bar{x}_{k-N}\|_{P^{-1}}^{2}+\frac{1}{2}\sum_{j=k-N}^{k-1}\|w_{j}\|^2_{Q^{-1}}+ \frac{1}{2}\sum_{j=k-N}^{k}\|v_{j}\|^2_{R^{-1}}  \\
   & \leq\frac{1}{2}\left(\|x_{k-N}-\bar{x}_{k-N}\|_{P^{-1}}^{2}+\rho_1\right)\nonumber
\end{align}
where $\rho_1=Nr_{w}^{2}\lambda_{\max}(Q^{-1})+(N+1)r_{v}^{2}\lambda_{\max}(R^{-1})$ with $r_{w}$ and $r_{v}$ being the upper bounds on the disturbances and noise as defined in Assumption~\ref{assume1}. In addition, it holds that
\begin{equation}\label{triangle}
  \|a_1\|^{2}\geq\frac{1}{2}\|a_1+a_2\|^{2}-\|a_{2}\|^{2}
\end{equation}
for any vectors $a_{1}$ and $a_{2}$.

Next, we establish a lower bound for the optimal cost $\Phi_{k}^{o}$ defined in \eqref{optimal_objective}.
By exploiting inequality \eqref{triangle}, it yields that
\begin{equation}\label{x}
  \|\hat{x}_{k-N|k}-\bar{x}_{k-N}\|^{2}_{P^{-1}}\geq\frac{1}{2}\|x_{k-N}-\hat{x}_{k-N|k}\|^{2}_{P^{-1}}-\|x_{k-N}-\bar{x}_{k-N}\|^{2}_{P^{-1}}
\end{equation}
By considering \eqref{model} and \eqref{triangle}, one can have
\begin{align}\label{y}
 \sum_{j=k-N}^{k}\|\hat{v}_{k}\|^{2}_{R^{-1}}  &=\left\|\{y\}_{k-N}^{k}-O\hat{x}_{k-N|k}-\Gamma\{\hat{w}\}_{k-N}^{k-1}-\Lambda\{u\}_{k-N}^{k-1}\right\|^2_{{\bf{R}}^{-1}}  \nonumber\\
   &= \|O(x_{k-N}-\hat{x}_{k-N|k})+\Gamma(\{w\}^{k-1}_{k-N}-\{\hat{w}\}_{k-N}^{k-1})+\{v\}_{k-N}^{k}\|^2_{{\bf{R}}^{-1}}  \\
   &\geq \frac{1}{2}\|O(x_{k-N}-\hat{x}_{k-N|k})-\Gamma\{\hat{w}\}_{k-N}^{k-1}\|^2_{{\bf{R}}^{-1}}-\|\Gamma\{w\}^{k-1}_{k-N}+\{v\}_{k-N}^{k}\|^2_{\bf{R}^{-1}}\nonumber\\
   & \geq\frac{1}{2}\|O(x_{k-N}-\hat{x}_{k-N|k})-\Gamma\{\hat{w}\}_{k-N}^{k-1}\|^2_{{\bf{R}}^{-1}}-\rho_{2}\nonumber
\end{align}
where $\rho_2=\sup_{k}\|\Gamma\{w\}^{k-1}_{k-N}+\{v\}_{k-N}^{k}\|^2_{{\bf{R}}^{-1}}$.

Based on inequality (30) in \cite{battistelli2018distributed}, we have
\begin{equation}\label{y_and_w}
   \frac{1}{2}\|O(x_{k-N}-\hat{x}_{k-N|k})-\Gamma\{\hat{w}\}_{k-N}^{k-1}\|^2_{\bf{R}^{-1}}+\|{\{\hat{w}\}}_{k-N}^{k-1}\|_{{\bf {Q}}^{-1}}^{2}\geq\frac{1}{2}\|x_{k-N}-\hat{x}_{k-N|k}\|^{2}_{G}
\end{equation}
where $ G  = O^{\mathrm{T}}({\bf{R}}^{-1}-{\bf{R}}^{-1}\Gamma(\Gamma^{\mathrm{T}}{\bf{R}}^{-1}\Gamma+2{\bf { Q}}^{-1})^{-1}\Gamma^{\mathrm{T}}{\bf{R}}^{-1})O  = O^{\mathrm{T}}({\bf{R}}+\frac{1}{2}\Gamma {\bf { Q}}\Gamma^{\mathrm{T}})^{-1}O$.
Substituting \eqref{x}, \eqref{y}, and \eqref{y_and_w} in the optimal objective function \eqref{optimal_objective}, one can derive the following inequality
\begin{align}\label{lower_bound}
 \Phi_{k}^{o}  & \geq \frac{1}{4}\|x_{k-N}-\hat{x}_{k-N|k}\|^{2}_{P^{-1}}-\frac{1}{2}\|x_{k-N}-\bar{x}_{k-N}\|^{2}_{P^{-1}}+\frac{1}{4}\|x_{k-N}-\hat{x}_{k-N|k}\|^{2}_{G}-\frac{1}{2}\rho_{2}\nonumber \\
   & =\frac{1}{4}\|x_{k-N}-\hat{x}_{k-N|k}\|^{2}_{\Xi}-\frac{1}{2}\|x_{k-N}-\bar{x}_{k-N}\|^{2}_{P^{-1}}-\frac{1}{2}\rho_{2}
\end{align}
where $\Xi =P^{-1}+G $.

We leverage the upper bound \eqref{upper_bound} and lower bound \eqref{lower_bound} on $\Phi_{k}^{o}$ to obtain an upper bound on the estimation error for the entire system as follows:
\begin{equation}\label{bound of e}
 \|x_{k-N}-\hat{x}_{k-N|k}\|^{2}_{\Xi}\leq4\|x_{k-N}-\bar{x}_{k-N}\|_{P^{-1}}^{2}+2(\rho_1+\rho_2)
\end{equation}
Base on the fact that $\bar{x}_{k-N}=A\hat{x}_{k-N-1|k-1}+Bu_{k-N-1}$ and \eqref{model1}, one can derive
\begin{align}\label{error}
  \|e_{k-N}\|^{2}_{\Xi} & \leq4\|Ax_{k-N-1}+w_{k-N-1}-A\hat{x}_{k-N-1|k-1}\|_{P^{-1}}^{2}+2(\rho_1+\rho_2) \nonumber\\
  &\leq 8\|e_{k-N-1}\|^{2}_{\Omega}+8\|w_{k-N-1}\|_{P^{-1}}^{2}+2(\rho_1+\rho_2)\nonumber\\
   & \leq 8\|e_{k-N-1}\|^{2}_{\Omega}+\rho\nonumber
\end{align}
with $\Omega=A^{\mathrm{T}}P^{-1}A$ and $\rho = 8\lambda_{\max}(P^{-1})r_{w}+ 2(\rho_1+\rho_2)$.
Then if $w_{k}=0$, $v_{k}=0$ for $k=N, N+1, \ldots$, the norm of the estimation error is bounded as
\begin{equation*}
  \|e_{k-N}\|^{2}_{\Xi}  \leq8\|e_{k-N-1}\|^{2}_{\Omega}, ~k=N, N+1, \ldots
\end{equation*}
Moreover, if $P$, $Q$, and $R$ are tuned such that \eqref{DMHE_con_stability} is satisfied, then $\lim_{k\rightarrow\infty}\|e_{k-N}\|^{2}=0$. $\square$

\end{proof}
\begin{rmk}
  If the number of iteration steps $p\rightarrow\infty$, the solutions for DMHE-2 and CMHE in \eqref{CMHE_opt} are equivalent according to Theorem \ref{thm3}. Therefore, the stability criteria of the CMHE can apply to DMHE-2 for asymptotically stable estimation error dynamics.
\end{rmk}

\section{Case study on a simulated chemical process}\label{sec.5}
\subsection{Process description}\label{sim:description}
We consider a reactor-separator chemical process that consists of two continuous stirred tank reactors (CSTRs) and one flash tank separator.
The three vessels as three subsystems are interconnected via mass and energy flows. A schematic of this process is shown in Figure \ref{CSTR}.
\begin{figure}
  \centering
  \includegraphics[width=0.75\textwidth]{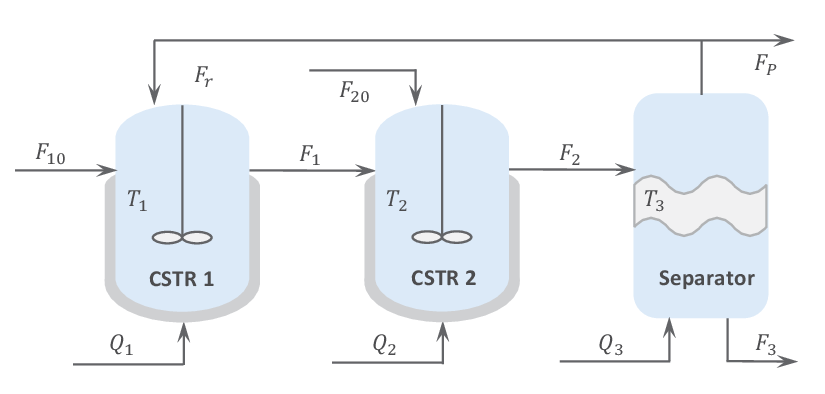}
  \caption{A schematic of the reactor-separator process.}\label{CSTR}
\end{figure}

A fresh feed flow containing pure reactant A is fed into the first vessel at volumetric flow rate $F_{10}$ and temperature $T_{10}$. In this vessel, the primary irreversible reaction converts the reactant A to the desired product B, a portion of which is further converted to undesired side product C. The outlet stream of the first vessel enters the second vessel at flow rate $F_{1}$ and temperature $T_{1}$. Another feed stream carrying pure A enters the second vessel at flow rate $F_{20}$ and temperature $T_{20}$, where the same reactions take place. The effluent of the second vessel is fed to the separator at flow rate $F_{2}$ and temperature $T_{2}$. A recycle stream leaves the separator and enters the first vessel at flow rate $F_{r}$ and temperature $T_{3}$. The three vessels are equipped with heating jackets, each of which can add/remove heat to/from the corresponding vessel at heating input rate $Q_i$, $i=1,2,3$. Based on the mass and energy balances, a first-principle dynamic model has been established \cite{zhang2013distributed}:
\begin{subequations}\label{eqn:exa}
\begin{align}
& \displaystyle{\frac{dx_{A1}}{dt}} = \frac{F_{10}}{V_1}(x_{A10} - x_{A1}) + \frac{F_r}{V_1}(x_{Ar} - x_{A1}) - k_1 e^{\frac{-E_1}{RT_1 }}x_{A1}  \\[0.3em]
& \displaystyle{\frac{dx_{B1}}{dt}} = \frac{F_{10}}{V_1}(x_{B10} - x_{B1}) +  \frac{F_r}{V_1}(x_{Br} - x_{B1})+  k_1 e^{\frac{-E_1}{RT_1 }}x_{A1} - k_2 e^{\frac{-E_2}{RT_1 }}x_{B1}  \\[0.3em]
& \displaystyle{~~\frac{dT_{1}}{dt}} = \frac{F_{10}}{V_1}(T_{10} - T_{1}) + \frac{F_r}{V_1}(T_{3} - T_{1}) - \frac{\Delta H_1}{ c_p}k_{1}e^{\frac{-E_1}{RT_1}}x_{A1} - \frac{\Delta H_2}{ c_p}k_{2}e^{\frac{-E_2}{RT_1}}x_{B1} +  \frac{Q_1}{\rho c_pV_1} \\[0.3em]
& \displaystyle{\frac{dx_{A2}}{dt}} = \frac{F_{1}}{V_2}(x_{A1} - x_{A2}) + \frac{F_{20}}{V_2}(x_{A20} - x_{A2}) - k_1 e^{\frac{-E_1}{RT_2 }}x_{A2}  \\[0.3em]
& \displaystyle{\frac{dx_{B2}}{dt}} = \frac{F_{1}}{V_2}(x_{B1} - x_{B2}) + \frac{F_{20}}{V_2}(x_{B20} - x_{B2})+  k_1 e^{\frac{-E_1}{RT_2 }}x_{A2} - k_2 e^{\frac{-E_2}{RT_2}}x_{B2}  \\[0.3em]
& \displaystyle{~~\frac{dT_{2}}{dt}} = \frac{F_{1}}{V_2}(T_{1} - T_{2}) + \frac{F_{20}}{V_2}(T_{20} - T_{2}) - \frac{\Delta H_1}{ c_p}k_{1}e^{\frac{-E_1}{RT_2}}x_{A2} - \frac{\Delta H_2}{ c_p}k_{2}e^{\frac{-E_2}{RT_2}}x_{B2} +  \frac{Q_2}{\rho c_pV_2} \\[0.3em]
& \displaystyle{\frac{dx_{A3}}{dt}} = \frac{F_{2}}{V_3}(x_{A2} - x_{A3}) - \frac{(F_r+F_p)}{V_3}(x_{Ar} - x_{A3})  \\[0.3em]
& \displaystyle{\frac{dx_{B3}}{dt}} = \frac{F_{2}}{V_3}(x_{B2} - x_{B3}) - \frac{(F_r+F_p)}{V_3}(x_{Br} - x_{B3}) \\[0.3em]
& \displaystyle{~~\frac{dT_{3}}{dt}} = \frac{F_{2}}{V_3}(T_{2} - T_{3}) + \frac{Q_3}{\rho c_pV_3}+\frac{(F_{r}+F_{p})}{\rho c_{p}V_{3}}(x_{Ar}\Delta H_{\text{vap1}}+x_{Br}\Delta H_{\text{vap2}}+x_{Cr}\Delta H_{\text{vap3}} )
\end{align}
\end{subequations}
The algebraic equations that characterize the compositions of the overhead stream are given as follows:
\begin{equation}\label{eqn:x_r}
\begin{aligned}
x_{Ar} &= \frac{\alpha _A x_{A3}}{\alpha _A x_{A3} + \alpha _B x_{B3} + \alpha _C x_{C3} } \\[0.3em]
x_{Br} &= \frac{\alpha _B x_{B3}}{\alpha _A x_{A3} + \alpha _B x_{B3} + \alpha _C x_{C3}} \\[0.3em]
x_{Cr} &= \frac{\alpha _C x_{C3}}{\alpha _A x_{A3} + \alpha _B x_{B3} + \alpha _C x_{C3}}\\[0.3em]
x_{C3} &= 1-x_{A3}-x_{B3}
\end{aligned}
\end{equation}
%
The definitions of the process states are presented in Table~\ref{tbl:variables}. 
The temperatures in the three vessels are measured online using hardware sensors. The objective is to apply the proposed methods to develop DMHE schemes that consist of three local estimators for the three subsystems, which will be used to estimate the nine system states based on the sensor measurements of the temperatures $T_1$, $T_2$, and $T_3$.
More detailed definitions of the parameters involved in the model in \eqref{eqn:exa} and \eqref{eqn:x_r} can be found in \cite{zhang2013distributed}, while the values of the parameters are shown in Table~\ref{tbl:parameter}.
\begin{table}
  \centering
    \caption{Definitions of process states}\label{tbl:variables}
  \begin{tabular}{cccccc}
  \toprule
  State & Physical meaning \\
  \midrule
  $x_{A1}$, $x_{A2}$, $x_{A3}$  &  Mass fractions of $A$ in vessels 1,2,3  \\
  \hline
  $x_{B1}$, $x_{B2}$, $x_{B3}$ &   Mass fractions of $B$ in vessels 1,2,3\\
  \hline
  $T_{1}$, $T_{2}$, $T_{3}$ &    Temperatures of feed stream in vessels 1,2,3\\
  \bottomrule
\end{tabular}
\end{table}
We consider a steady-state operating point $x_s = [0.2055~~0.6751~~474.0056~\mathrm{K}~~0.2243~~0.6564~~467.2124~\mathrm{K}$ $0.0781~~0.7032~~468.9572~\mathrm{K}]^{\mathrm{T}}$, which is corresponding to constant inputs $Q_1 = 2.8\times 10^6~\mathrm{kJ/h}$, $Q_2=1.1\times 10^6~\mathrm{kJ/h}$, $Q_3=2.8\times 10^6~\mathrm{kJ/h}$, $F_{10}=5.04~\mathrm{m^3/h}$, $F_{20}=5.04~\mathrm{m^3/h}$, $F_{r}=50.4~\mathrm{m^3/h}$, and $F_{p}=0.504~\mathrm{m^3/h}$. At $x_s$, a linearized model is obtained through Taylor-series expansion. The linearized model is further discretized with discretization step size $h=0.02~\text{h}$, and a discrete-time model in the form of \eqref{model} can be obtained for the process. The system matrices of the discrete-time linearized model are presented in the Appendix.
In the implementation, considering the significant difference in the magnitudes of the states, we scale each of the states through dividing it by the maximum value that the corresponding state can take during the entire process operation. Since all the states of the considered chemical process are non-negative, the scaled states stay within $[0,1]$. Accordingly, state estimation is conducted in the scaled coordinate. This way, potential numerical issues can be avoided. Addictive unknown system disturbances and measurement noise that follow the Gaussian distribution with zero mean and standard deviation of 0.01 are added to the scaled states and measured outputs, respectively. The estimates of the actual process states can be obtained based on the estimates in the scaled coordinate.

\begin{table}
\renewcommand\arraystretch{1.25}
  \centering
    \caption{Process parameters}\label{tbl:parameter}
  \begin{tabular}{llllll}
   \toprule[1pt]
   & $V_{1} = 1.0~ \mathrm{m^3/h}$  \quad\quad\quad\quad\quad& $\Delta H_{1} = -6.0\times 10^{4} ~\mathrm{kJ/kmol}$\\
&  $V_{2} = 0.5 ~\mathrm{m^3/h}$\quad\quad\quad\quad\quad& $\Delta H_{2} = -7.0\times 10^{4} ~\mathrm{kJ/kmol}$\\
& $V_{3} = 1.0 ~\mathrm{m^3/h}$\quad\quad\quad\quad\quad& $\Delta H_{\text{vap1}} = -3.53\times 10^{4} ~\mathrm{kJ/kmol}$\\
& $\alpha_{A} = 3.5$ \quad\quad\quad\quad\quad& $\Delta H_{\text{vap2}} = -1.57\times 10^{4} ~\mathrm{kJ/kmol}$\\
& $\alpha_{B} = 1.0$\quad\quad\quad\quad\quad& $\Delta H_{\text{vap3}} = -4.068\times 10^{4} ~\mathrm{kJ/kmol}$\\
&$\alpha_{C} = 0.5$  \quad\quad\quad\quad\quad& $k_1=2.77\times 10^3~ \text{s}^{-1}$\\
& $T_{10} = 300~\mathrm{K}$ \quad\quad\quad\quad\quad& $k_2=2.6\times 10^3 ~\text{s}^{-1}$\\
&$T_{20} = 300~\mathrm{K}$ \quad\quad\quad\quad\quad&  $c_{p}=4.2 ~\mathrm{kJ/kg.K}$\\
& $E_{1} = 5.0\times 10^4~\mathrm{kJ/kmol}$\quad\quad\quad\quad\quad& $x_{A10}=1$\\
& $E_{2} = 6.0\times 10^4~\mathrm{kJ/kmol}$\quad\quad\quad\quad\quad& $x_{B10}=0$ \\
&$R=8.314~ \mathrm{kJ/kmol.K}$\quad\quad\quad\quad\quad&$x_{A20}=1$   \\
&$\rho=1000.0~\mathrm{kg/m^3}$\quad\quad\quad\quad\quad& $x_{B20}=0$\\
   \bottomrule[1pt]
\end{tabular}
\end{table}

\begin{rmk}
A linearized model can approximate the dynamical behavior of the process satisfactorily within a certain region around the operating point where linearization is conducted.
Many chemical processes, including the reactor-separator process considered in this paper, tend to be operated at a steady-state operating point for safe operation and consistent production. For a process of which the operation does not deviate much from the linearization point, the mismatch between the nonlinear model and the linearized model may be acceptable.
In such cases (when the mismatch is not significant), the proposed linear method may be applied. Meanwhile, when the fluctuations in the states become significant, then nonlinear distributed state estimation can be necessary. We note that the proposed method has the potential to be extended to address nonlinear dynamics explicitly; this is left to our future work.
\end{rmk}

\begin{rmk}
It is worth mentioning that state estimation is conducted in the scaled coordinate. In addition, the states of the linearized model represent the deviation of the actual states from the steady-state level. Therefore, the state-steady values of the states need to be added to the estimates generated by the developed distributed estimation scheme to obtain the estimates of the actual states.
\end{rmk}
\subsection{Results for DMHE-1}\label{sim:DMHE-1}
In this section, we present the estimation results given by the proposed DMHE-1, and compare the performance of two DMHE algorithms: (1) the proposed DMHE algorithm, and (2) the DMHE algorithm proposed in \cite{schneider2015convergence}. Both algorithms are partition-based DMHE with an iterative evaluation of local estimators within every sampling interval. Meanwhile, the algorithm in \cite{schneider2015convergence} does not include penalty system disturbances in the objective functions of the local estimators.

In this set of simulations, 
the estimators for the subsystems are required to be executed for 5 iteration steps within every sampling interval. For the $i$th estimator of DMHE-1, the weighting matrices $Q_{i}$ in each estimator and $R$ are chosen as the covariance matrices of the subsystem disturbances and measurement noise added in the simulations; $P_{i}$ in each estimator is a diagonal matrix with main diagonal elements being 0.1 (considering the fact that all the states are scaled in state estimation). The initial state and the initial guess used by the estimators are shown in Table~\ref{tbl:initial_states}. We generate system disturbances and measurement noise that follow Gaussian distribution of zero mean and standard deviation of 0.01, which are added to the scaled system states and measured outputs, respectively. 

\begin{table}
  \centering
    \caption{The actual values $x_{0}$ and the initial guess $\hat{x}_{0}$ of the initial states of the process}\label{tbl:initial_states}
  \begin{tabular}{cccccccccc}
   \toprule[1pt]
   & $x_{A1}$ & $x_{B1}$ & $T_{1}$ $(K)$ & $x_{A2}$ & $x_{B2}$ & $T_{2}$ $(K)$ & $x_{A3}$ & $x_{B3}$ & $T_{3}$ $(K)$\\
   \midrule
   $x_{0}$ & 0.1939 & 0.7404 & 528.3482 & 0.2162 & 0.7190 & 520.0649 & 0.0716 & 0.7373 & 522.3765\\
    $\hat{x}_{0}$ & 0.2080 & 0.7943 & 566.7735 & 0.2319 & 0.7712  & 557.8878 & 0.0768 & 0.7910 & 560.3675\\
   \bottomrule[1pt]
\end{tabular}
\end{table}

\begin{table}[tt]
  \centering
  \setlength{\abovecaptionskip}{0cm}
  \setlength{\belowcaptionskip}{0.2cm}
    \caption{Average RMSEs of the proposed DMHE-1 for different lengths of the estimation horizon.}\label{tbl:horizon}
  \begin{tabular}{c|c|c|c|c|ccccc}
   \hline
   $N$ & 2 & 5 & 10 & 15 & 20\\
   \hline
   RMSE & 0.3123 & 0.2962 & 0.2655 & 0.2770 & 0.2954 \\
   \hline
\end{tabular}
\end{table}

\begin{table}[tt]
  \centering
  \setlength{\abovecaptionskip}{0cm}
  \setlength{\belowcaptionskip}{0.2cm}
 \caption{Average computation time for one-time execution for the two DMHE designs with different lengths of the estimation horizon.}\label{tbl:time}
  \begin{tabular}{c|c|c|c|c|ccccc}
   \hline
   $N$ & 2 & 5 & 10 & 15 & 20\\
   \hline
  DMHE-1 (sec) &  $7.83\times10^{-5}$&  $1.291\times10^{-4}$& $2.217\times10^{-4}$ &  $5.006\times10^{-4}$ & $7.966\times 10^{-4}$  \\
   \hline
   DMHE-2 (sec) &  0.0569& 0.1674 & 0.5358& 1.0787 & 2.3835 \\
   \hline
\end{tabular}
\end{table}

Next, we evaluate the estimation performance of the proposed DMHE-1 with different lengths of estimation horizon $N$. In particular, we consider the cases when $N=2,5,10,15,20$. The estimation performance is evaluated by calculating the root mean square errors (RMSE) over 500 Monte Carlo tests. For each simulation run, the RMSE is computed following:
\begin{equation*}
    \mathrm{RMSE}(k) = \sqrt{\frac{\sum_{j=1}^{n}  \big\|\hat{x}_{k}^{j}-x_{k}^{j}\big\|  ^2}{n}}
\end{equation*}
where $n$ is the number of subsystems and $k$ is the sampling instant. 100 simulation runs are conducted for each considered length of the estimation horizon. Each simulation run covers 2-hour process operation. The average RMSE is computed for each $N$ over the extra 500 simulation runs. The average RMSE values for the five cases (i.e., $N=2,5,10,15,20)$ are presented in Table~\ref{tbl:horizon}. Among the five cases, when $N=10$, DMHE-1 provides the smallest estimation error.
The actual states and the estimates of the states of the process given by provided by DMHE-1 with $N=10$ are presented in Figure \ref{1}. The estimates are able to track the actual system states accurately.
The average computation time for one-time execution for each estimator of DMHE-1 with different estimation horizon lengths is computed based on 100 simulation runs. The largest average computation time among the three estimators of DMHE-1 is presented in Table \ref{tbl:time}. As the length of the estimation horizon increases, the computation time increases accordingly.

\begin{figure}[tttt]
  \centering
  \includegraphics[width=0.95\textwidth]{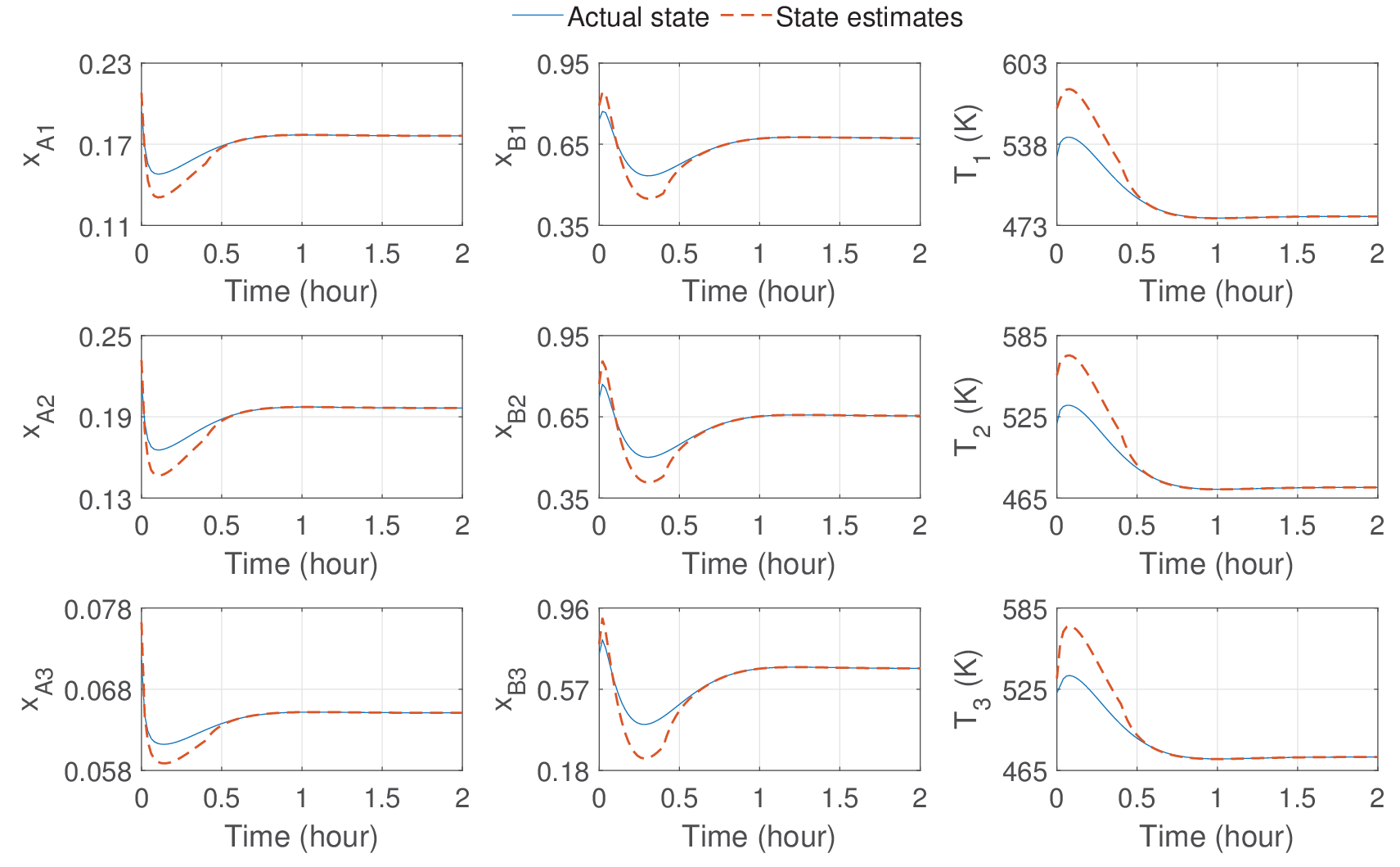}
  \caption{Trajectories of the actual system states and state estimates for the three vessels.}\label{1}
\end{figure}
Subsequently, we compare the estimation performance of the proposed method and the method in \cite{schneider2015convergence} in the presence of system disturbances and measurement noise. The length of the estimation horizon, and the number of iteration steps for each estimator, as well as the weighting matrices $P_{i}$ and $R$ in \cite{schneider2015convergence} are made the same as those for DMHE-1 proposed in this paper.
The performance of each of the two DMHE algorithms is assessed by calculating the RMSE over 500 Monte Carlo runs. 
The trajectories of the average RMSE (calculated based on the 500 runs) for the estimates given by the two DMHE algorithms
are shown in Figure \ref{2}. The overall trends of the two RMSE trajectories are similar. 
At the same time, under this setting, the proposed method provides an overall smaller estimation error as compared to the method in \cite{schneider2015convergence}. This may be because of the incorporation of the penalties on system disturbances into the individual estimators of the distributed scheme in the proposed method.
\begin{figure}[tttt]
  \centering
  \includegraphics[width=0.72\textwidth]{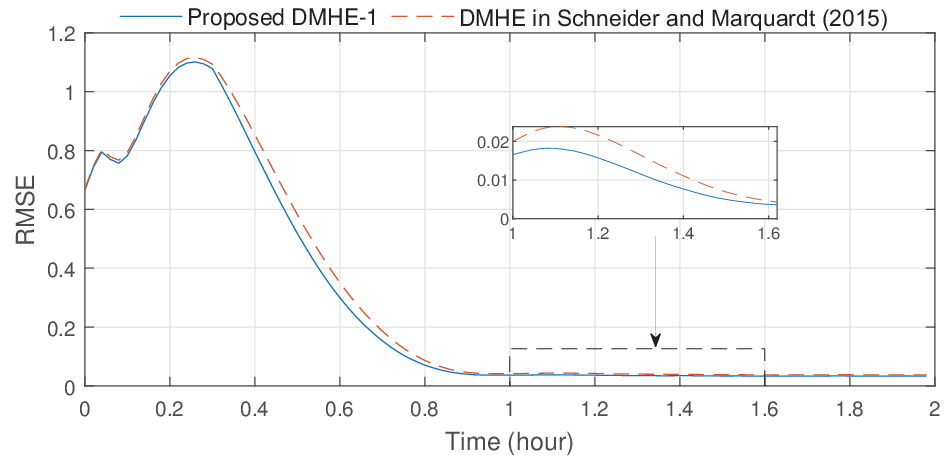}
  \caption{The trajectories of the average RMSEs for the proposed DMHE-1 and the DMHE algorithm in \cite{schneider2015convergence} calculated based on 500 simulation runs with the standard deviation of the system disturbances being $\sigma_w=0.01$.}\label{2}
\end{figure}

We further compare the estimation performance of the proposed DMHE-1 and the algorithm in \cite{schneider2015convergence} under conditions with different magnitudes of system disturbances. Specifically, we consider measurement noise of the same magnitude, while the standard deviation of the system disturbance varies from 0.001 to 0.05. The weighting matrices $Q$ and $R$ are considered diagonal matrices with the main diagonal elements being 0.01. With system disturbances of different standard deviations, we calculate the mean values of the average RMSEs, over 2-hour estimation, for both DMHE algorithms; the results are shown in Figure~\ref{3}. The results show that increasing the magnitude of the system disturbances leads to an increase in the average estimation error for each of the DMHE algorithms, which is consistent with our intuition. In addition, with the chosen estimator parameters, the estimation error of DMHE-1 is slightly smaller than that of the DMHE algorithm in \cite{schneider2015convergence} under all the considered case scenarios with different magnitudes of system disturbances. Figure~\ref{3} also indicates that the difference between the mean values of estimation errors for the two algorithms becomes larger with the increase in the standard deviation of the system disturbances. The results demonstrate the superiority of the proposed method due to the incorporation of system disturbances into the objective functions of local estimators.

\begin{figure}[tttt]
  \centering
  \includegraphics[width=0.65\textwidth]{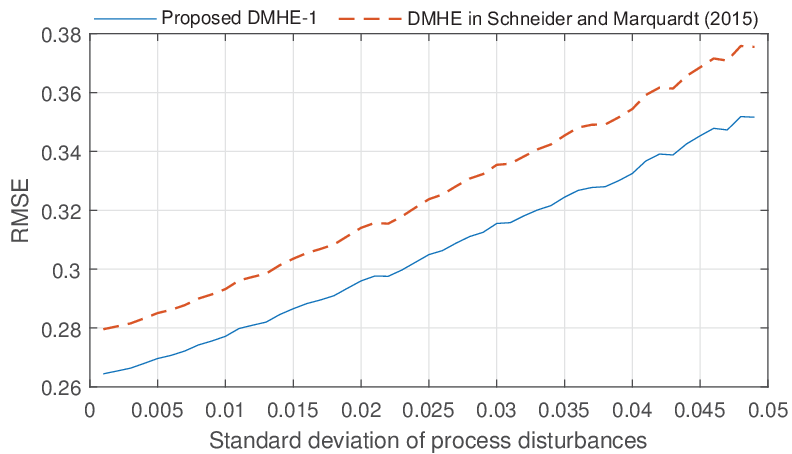}
  \caption{The mean values of the average RMSEs for the proposed DMHE-1 and the DMHE algorithm in \cite{schneider2015convergence} with different magnitudes of the system disturbances.}\label{3}
\end{figure}


\subsection{Results for DMHE-2}
In this subsection, DMHE-2 is applied to handle the hard constraints on the states of the chemical process. In particular, considering the physical meanings of the system states (i.e., material mass fractions and temperatures), the estimates of material mass fractions are required to be bounded within $[0,1]$, and the estimates of temperature are required to be non-negative as DMHE-2 is applied.
Additive disturbances are generated following normal distribution of zero mean and standard deviation of 0.01, and are then made bounded within $[-0.1,0.1]$. Weighting matrix $P_{i}$ for each estimator of DMHE-2 is considered a diagonal matrix with the main diagonal elements being 0.01. Other parameters set are chosen the same as those for DMHE-1. The local estimators for the three subsystems are designed following \eqref{DMHE2}, and Algorithm~\ref{alg1} is implemented to provide estimates of the process states.
Figure \ref{4} presents the trajectories of the actual system states and the estimates provided by the DMHE-2. In the constrained setting, DMHE-2 can provide good estimates of the process states.
The average computation time for one-time execution for each estimator of DMHE-2 with different estimation horizon lengths is computed based on 100 simulation runs. The largest average computation time among the three estimators for DMHE-2 is also presented in Table \ref{tbl:time}.
The computation time increases as the length of the estimation horizon increases. The average computation time for DMHE-2 is significantly larger than that of DMHE-1. This is because DMHE-2 involves solving individual optimization problems in each iteration.
\begin{figure}[tttt]
  \centering
  \includegraphics[width=0.95\textwidth]{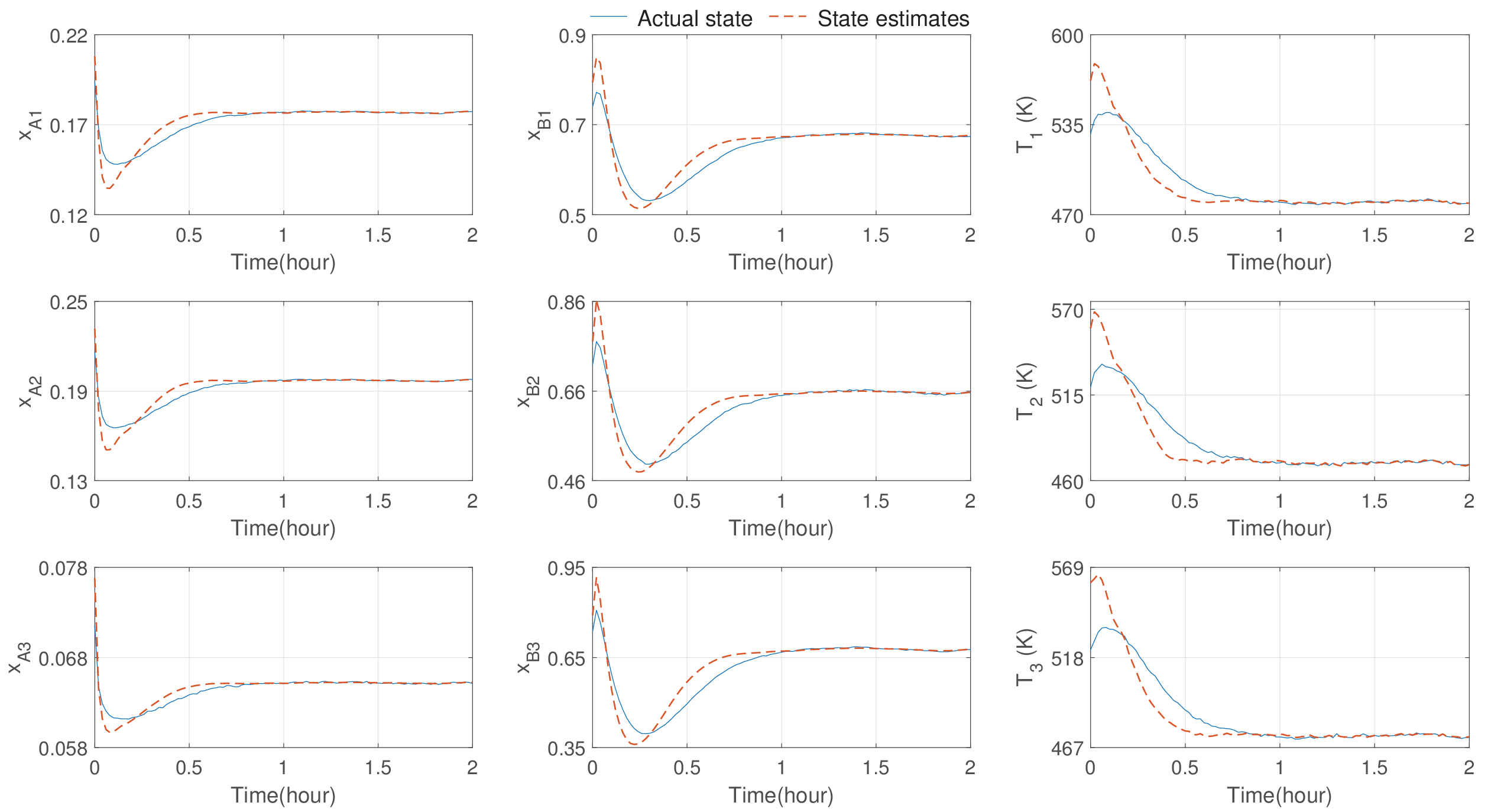}
  \caption{Trajectories of the actual system states and state estimates for three vessels CSTRs with constraints.}\label{4}
\end{figure}
\newpage
\section{Conclusions}
In this work, we revisited the problem of partition-based distributed state estimation of linear systems. The entire process model was decomposed into smaller subsystems that interact with each other. Based on the subsystem models, a distributed state estimation scheme was developed with local estimators designed based on MHE. Both the unconstrained case and the constrained case were taken into account, and two different DMHE formulations were presented for plant-wide state estimation under the two case scenarios. In both formulations, the MHE-based estimators of the distributed scheme are required to be executed iteratively, such that the estimates given by the proposed distributed scheme can be made convergent to the estimates of the corresponding centralized MHE. The convergence of the iterative executions of the MHE estimators, and the stability of the estimation error dynamics were proven. A simulated chemical process consisting of two reactors and a separator was used to illustrate the proposed method. Good estimates of the process state variables were obtained for both DMHE formulations.

\section*{Acknowledgment}

This work is supported by Ministry of Education, Singapore, under its Academic Research Fund Tier 1 (RG63/22), and Nanyang Technological University, Singapore.


%
%
%
%
\newpage
\appendix
\section*{Appendix} \label{sct:appendix}
The system matrices of the discrete-time linearized model for the chemical process in Section \ref{sec.5} are presented below.
\begin{align*}
  A & =  \left[\begin{array}{ccccccccc}
                 0.1401 & -0.0079 & -0.6150 & 0.0925 & -0.0034 & -0.1887 & 0.1978 & -0.0055 & -0.3139 \\
                 0.2102 & 0.3358 & 0.1527 & 0.0394 & 0.1134 & 0.0731 & 0.1076 & 0.2631 & 0.0952 \\
                 0.0395 & 0.0059 & 0.5144 & 0.0135 & 0.0022 & 0.1789 & 0.0298 & 0.0055 & 0.3992 \\
                 0.1269 & -0.0064 & -0.6433 & 0.0802 & -0.0031 & -0.2113 & 0.1673 & -0.0053 & -0.2957 \\
                 0.2529 & 0.3696 & 0.1645 & 0.0580 & 0.1203 & 0.0773 & 0.0882 & 0.2067 & 0.0968 \\
                 0.0423 & 0.0059 & 0.5551 & 0.0155 & 0.0019 & 0.1889 & 0.0302 & 0.0039 & 0.3383 \\
                 0.0660 & 0.0061 & -0.1895 & 0.0464 & 0.005 & -0.0708 & 0.0857 & 0.0110 & -0.0723 \\
                 0.2793 & 0.3550 & -0.0335 & 0.1851 & 0.2450 & -0.0069 & 0.3195 & 0.4402 & 0.0020 \\
                 0.0236 & 0.0055 & 0.4111 & 0.0107 & 0.0038 & 0.2434 & 0.0133 & 0.0074 & 0.3927
               \end{array}\right]\\
  B & = \left[\begin{array}{ccc}
                -0.0154 & -0.0021 & -0.0053 \\
                0.0050 & 0.0010 & 0.0019 \\
                0.0243 & 0.0023 & 0.0106 \\
                -0.0135 & -0.0051 & -0.0042 \\
                0.0047 & 0.0024 & 0.0016 \\
                0.0174 & 0.0098 & 0.0016 \\
                -0.0031 & -0.0013 & -0.0008 \\
                0.0002 & 0.0003 & 0.0001 \\
                0.0076 & 0.0058 & 0.0210
              \end{array}\right]
  C  = \left[\begin{array}{ccccccccc}
                0 & 0 & 1 & 0 & 0 & 0 & 0 & 0 & 0 \\
                0 & 0 & 0 & 0 & 0 & 1 & 0 & 0 & 0 \\
                0 & 0 & 0 & 0 & 0 & 0 & 0 & 0 & 1
              \end{array}\right]
\end{align*}

\end{document}